\renewcommand\footnotetextcopyrightpermission[1]{}
\newcommand{\myparagraph}[1]{\paragraph{#1}}
	\providecommand\BibTeX{{%
			\normalfont B\kern-0.5em{\scshape i\kern-0.25em b}\kern-0.8em\TeX}}}
\renewcommand{\figurename}{Algorithm}
\newcommand{\Guy}[1]{}
\newcommand{\Naama}[1]{}
\newcommand{\Hao}[1]{}
\newcommand{\hedit}[1]{{#1}}
\newcommand{\op}[1]{\texttt{#1}} 
\newcommand{\flock}{\textsc{Flock}}
\newcommand{\locked}{blocking}
\newcommand{\lf}{non-blocking}
\newcommand{\trylock}{tryLock} 
\newcommand{\strictlock}{strictLock} 
\newcommand{\recsucceed}{recursively succeed} 
\newcommand{\recsuccessful}{recursively successful} 
\newcommand{\tlock}{try lock} 
\newcommand{\slock}{strict lock} 
\newcommand{\locklesslock}{lock-free lock}
\newcommand{\blocklock}{blocking lock}
\newcommand{\leaftree}{\op{leaftree}}
\newcommand{\leaftreelf}{\op{leaftree-lf}}
\newcommand{\leaftreelb}{\op{leaftree-bl}}
\newcommand{\arttree}{\op{arttree}}
\newcommand{\leaftreap}{\op{leaftreap}}
\newcommand{\leaftreaplb}{\op{leaftreap-bl}}
\newcommand{\lazylist}{\op{lazylist}}
\newcommand{\hashtable}{\op{hashtable}}
\newcommand{\dlist}{\op{dlist}}
\newcommand{\abtree}{\op{abtree}}
\newcommand{\future}[1]{} 
\newcommand{\bronson}{Bronson}
\newcommand{\natarajan}{Natarajan}
\newcommand{\drachsler}{Drachsler}
\newcommand{\ellen}{Ellen}
\newcommand{\chromatic}{Chromatic}
\newtheorem{theorem}{Theorem}[section]
\newtheorem{definition}{Definition}
\author{Naama Ben-David}
\affiliation{%
	\country{VMware Research, USA}
}
\email{bendavidn@vmware.com}
\author{Guy E. Blelloch}
\affiliation{%
	\country{Carnegie Mellon University, USA}
}
\email{guyb@cs.cmu.edu}
\author{Yuanhao Wei}
\affiliation{%
	\country{Carnegie Mellon University, USA}
}
\email{yuanhao1@cs.cmu.edu}
\keywords{locks, lock-free data structures, idempotence}
\begin{document}
\title{Lock-Free Locks Revisited}

	\begin{abstract}

This paper presents a new and practical approach to lock-free locks based on helping, which allows the user to write code using fine-grained locks, but run it in a lock-free manner. 
  Although lock-free locks have been suggested in the past, they are widely viewed as impractical, have some key limitations, and, as far as we know, have never been implemented.
The paper presents some key techniques that make lock-free locks practical and more general. 
The most important technique is an approach to idempotence---i.e. making code that runs multiple times appear as if it ran once.  The idea is based on using a shared log among processes running the same protected code.  
Importantly, the approach can be library based, requiring very little if any change to standard code---code just needs to use the idempotent versions of memory operations (load, store, LL/SC, allocation, free).   

We have implemented a C++ library called \flock{} based on the ideas.  \flock{} allows lock-based data structures to run in either lock-free or blocking (traditional locks) mode.  We implemented a variety of tree and list-based data structures with \flock{} and compare the performance of the lock-free and blocking modes under a variety of workloads.  The lock-free mode is almost as fast as blocking mode under almost all workloads, and significantly faster when threads are oversubscribed (more threads than processors).  We also compare with several existing lock-based and lock-free alternatives.

\end{abstract}

	\maketitle 
	\pagestyle{plain}
	\section{Introduction}

To be or not to be  lock free, that is the question.  Lock-free, or \emph{\lf}, algorithms, 
are guaranteed to make progress even if processes
fault or are delayed indefinitely.   They are, however, burdened with
some issues.
One important issue is that they tend to be significantly more
complicated than their lock-based, \emph{blocking}, counterparts.  Even basic data
structures such as stacks, queues, and singly linked lists can lead to
non-trivial lock-free algorithms with subtle correctness proofs.  More
sophisticated data structures, such as binary trees and doubly linked
lists, become considerably more complicated.  If one needs to
atomically move data among structures, lock-free algorithms become
particularly tricky.  Developing efficient algorithms with fine-grained
locks is not necessarily a cakewalk, but is typically significantly simpler.

Another issue is performance.  The relative performance of \locked{} vs. \lf{}
algorithms depends significantly on the environment in
which they are run.  Several papers demonstrate that \locked{}
concurrent algorithms can be
as fast or faster~\cite{lazylist06,DGT15,Brown18,bwtrees18}.  However, the
experiments described in these papers are typically run in rarified
environments in which all processes are dedicated to the task, often
pinned to dedicated cores.  They are also set up to have no page
faults or other significant delays.  In such environments, it is not
surprising that algorithms using fine-grained (blocking) locks do well.  Some
have noted, however, that in environments with oversubscription (more
processes than cores) \locked{} algorithms can suffer~\cite{DGT15}.
Our experiments verify this (e.g. see the right side of
Figure~\ref{subfig:tree-100M-zipf-200} and~\ref{subfig:tree-100K-zipf}.
Of course, lock-based algorithms can also come to a grinding halt in
environments where processes can be faulty.

In summary, for robustness in mixed environments, or for peace of mind
in general, lock-free algorithms can have a significant advantage, but
they come at the cost of more subtle and complicated designs,
especially when used for more advanced data structures.  Due to the
tradeoffs, there is no universal agreement on whether lock-based or
lock-free algorithms are better---some algorithms are
lock-free~\cite{Natarajan14,BrownER13,harris2001pragmatic,EllenFRB10,bwtrees18,harris2002practical} and others use fine-grained
locks~\cite{drachsler14,bronson10,masstree12,lazylist06,KungL80,Bayer88,arttre16,pugh89}.  A third choice is to use transactional
memory, but this has not yet shown itself to be competitive with
either lock-free or lock-based approaches.

In this paper, we describe and study an approach that can get the best
of both worlds---i.e., allow one to program with fine-grained locks
while getting efficient lock-free behavior.  We base our approach on a
thirty-year-old idea of lock-free locks by Turek, Shasha and
Prakash~\cite{turek1992locking} and independently
Barnes~\cite{barnes1993method} (henceforth the TSP-B approach).   However, we
extend it significantly with several important new ideas to make it
practical and more general.  The high-level idea of the TSP-B approach
is that when a thread takes a lock it leaves behind a descriptor that
allows other threads that want the lock to help it complete its
protected code and free its lock.  The general idea of using
descriptors for helping is now widely used in the implementation of
specific lock-free applications, such as
multiword-CAS~\cite{guerraoui2020efficient,feldman2015wait,harris2002practical, wang2018easy},
other multiword operations~\cite{BrownER13}, software transactional
memory~\cite{shavit1997software,fraser2007concurrent,ramalhete2019onefile}, and specialized
data structures~\cite{EllenFRB10,shafiei2014non,winblad2021lock,disc2021mvgc,dice2002mostly}.

Despite the use of descriptors for helping in specific applications,
we know of no general implementations of lock-free locks.  Most of the
papers cited above mention the TSP-B approach, often as motivation for
their more specific approach, but describe it as impractical.  The
issue is that the TSP-B approach requires translating code in the lock
into a form such that every read or write effectively requires saving
the context of the process (program counter and local variables)
so that others can help it run from that point.  Such code can be very
inefficient even when no helping occurs.  Equally importantly, it makes
the approach very difficult and clumsy to use without a
special-purpose compiler.  Their approach also constrains the code
inside the locks to only allow race-free reads and writes to shared
memory. 

The key contribution of this paper is an approach to avoid the
``context-saving'' on each memory operation, making the approach
practical, and additionally making it more general.  In our approach
the user can write standard code based on fine-grained locks, and
using a library interface, get lock-free behavior.  Beyond being
efficient and offering a simple library-based interface, our approach
generalizes the TSP-B approach by (1) allowing races in the locked
code, (2)
supporting memory allocation and freeing in the locked code, and (3)
supporting try locks, which we demonstrate are much more efficient
than the standard strict locks.
The advantage of try lock is that it returns false if the lock is currently taken, giving the user the flexibility of either trying again or performing a different operation.

Our approach is based on a new technique to achieve idempotence.
Intuitively, idempotent code is code that can be run multiple times but
appears to have run once~\cite{idempotence02,idempotence12,idempotence13,lltheory}.  Such code is important in the
TSP-B approach since multiple helpers could run the same locked code
when helping.  TSP-B suggest particular approaches to
idempotence (the approaches by TSP and B are quite similar) but failed
to abstract out the notion of just needing idempotent code.  Here, we
abstract out the need of idempotence for lock-free locks and suggest a
very different, as well as more efficient and general, approach to
achieving idempotence.  We also point out that to nest locks, we simply
need the locking code itself to be idempotent, leading to
locking code that is very simple.  

In our approach to idempotence, instead of using ``context saving''
we maintain a shared log among processes running the same code.  The
log keeps track of  all reads from shared mutable locations, as well
as some other events, such as memory allocations. Whenever a copy of
the thunk executes a loggable operation, it commits it to the log
using a compare-and-swap (CAS).  Whichever copy commits first wins,
and all others take the value committed instead of their attempted
commit.  In this way, they all see the same committed values, e.g.,
the same reads, even though they are running in an arbitrary
interleaved manner.


One key advantage of our approach is that the user can write
concurrent algorithms based on fine-grained locks, and then either run
them in a \emph{lock-free mode} (with helping) or a \emph{blocking mode} (no helping).
The blocking mode can use a standard lock implementation
without logging.  The helping mode will log, at some additional cost,
but guarantee lock-free behavior.
Another key advantage over TSP-B is that our approach is based on try
locks, instead of strict locks, which turns out to be important for
the efficiency of 
optimistic use of fine-grained locks.

We have implemented our approach as a C++-based library called
\flock{}.  Based on the library we have implemented several data
structures based on try-locks, including singly linked lists, doubly
linked lists, binary trees, balanced blocked binary trees, (a,b)-trees,
hash tables, and adaptive radix trees (ART).  We compare performance
of our versions in lock-free mode and blocking mode to the most
efficient existing data structures we found, both lock-based and
lock-free.  The lock-based data structures generally perform slightly
better under controlled environment with one process per processor,
but perform significantly worse when oversubscribing with multiple
processes per processor.  Comparing running our algorithms in
lock-free vs. blocking mode, the lock-free performance rarely has more
than 10\% overhead, and typically much less.  However with
oversubscription the lock-free mode greatly outperforms the blocking
mode by up to 2.4x. 

Our contributions include the following:
\begin{enumerate}
\item We present a new practical approach to achieving idempotence in general
  code, which relies on logging rather than context saving.  

\item
We present a new approach to lock-free try-locks.   They can be nested.

\item
We develop a general library-based interface to support our ideas.

\item 
We compare several existing approaches with ours, both using locking
and without locking.

\item 
We develop the first lock-free implementation of adaptive radix trees.
\end{enumerate}

	\subsection{Example of Using Lock-Free Locks}
\label{sec:listexample}

\newcommand{\btlock}{\texttt{{try\_lock}}}

\newcommand{\wrapred}[1]{\texttt{\color{red}\textbf{#1}}}
\newcommand{\template}[2]{\wrapred{#1<}#2\wrapred{>}}
\newcommand{\mutable}[1]{\template{mutable\_}{#1}}

\begin{figure}
\begin{lstlisting}
struct link {
	@\mutable{link*}@ next;@\label{line:next}@
	@\mutable{link*}@ prev;@\label{line:previous}@
	@\mutable{bool}@ removed;@\label{line:removed}@
	Key k;  Value v; @\wrapred{lock}@ lck; 
	link(Key k, Value v, link* next, link* prev) 
		: k(k), v(v), next(next), prev(prev), removed(false) 
		{};};

link* find_link(link* head, Key k) {
	link* lnk = (head->next).load();
	while (k > lnk->key) lnk = (lnk->next).load(); @\label{line:loadFind}@
	return lnk;}

std::optional<Value> find(link* head, Key k) {
	link* lnk = find_link(head, k);
	if (lnk->key == k) return lnk->value;   // found
	else return {}; }                         // not found

bool insert(link* head, Key k, Value v) {
	while (true) {
		link* next = find_link(head, k);
		if (next->key == k) return false;  // already there
		link* prev = (next->prev).load(); @\label{line:loadInsert}@
		if (prev->key < k &&
		    @\wrapred{try\_lock}@(prev->lck, [=] {
					if (prev->removed.load() ||   // validate
					    (prev->next).load() != next)
						return false;
					link* newl = @\template{allocate}{link}@(k, v, next, prev);
					prev->next = newl;  @\label{line:insertnext}@// splice in
					next->prev = newl;  @\label{line:insertprev}@
					return true;}))
			return true;}};   // success

bool remove(link* head, Key k) {
	while (true) {
		link* lnk = find_link(head, k);
		if (lnk->key != k) return false;  // not found
		link* prev = (lnk->prev).load(); @\label{line:loadRemove}@
		if (@\wrapred{try\_lock}@(prev->lck, [=] { @\label{line:testacquired}@
					return @\wrapred{try\_lock}@(lnk->lck, [=] {
						if (prev->removed.load() ||    // validate @\label{line:removetest}@
						    (prev->next).load() != lnk) @\label{line:sametest}@
							return false; 
						link* next = (lnk->next).load();
						lnk->removed = true;
						prev->next = next; @\label{line:remnext}@  // splice out
						next->prev = prev; @\label{line:remprev}@
						@\template{retire}{link}@(lnk);
						return true;});})) 
			return true;}}  // success
\end{lstlisting}
\caption{Sorted doubly-linked lists using fine-grained
  optimistic locks with \flock{}.
\flock{} code shown in red.
}
\label{alg:list}
\end{figure}

To be concrete on how lock-free locks are used in our framework, we give an example of maintaining a concurrent sorted doubly-linked list supporting insert, delete, and find.  The example uses optimistic fine-grained locks~\cite{KungL80,KungR81}.  Our C++ code using \flock{} is given in Algorithm~\ref{alg:list}.  
Each link holds a key and value, a previous and next pointer, a lock,
and a flag indicating whether the link has been removed.  The
\texttt{mutable\_} wrapper around \texttt{next}, \texttt{prev}, and
\texttt{removed} (lines~\ref{line:next}--\ref{line:removed}) indicates
that these are shared mutable values\footnote{The underscore in \texttt{mutable\_} is used
  since unforunately \texttt{mutable} is an obscure reserved symbol in C++.}.  They need to be read using a \texttt{load}, with a similar interface to a C++ \texttt{atomic}.  \flock{} will log loads of mutable values when inside a lock.  Since the key and value are immutable, they need not be put in \texttt{mutable\_}.

Locks are attempted with the \btlock{} function.  It takes a lock as
an argument, as well as a \emph{thunk} (a function with no arguments).
In \flock{}, the thunk is simply a C++ lambda expression containing
the code to be run when the lock is acquired.
If the lock is free, \btlock{} acquires the lock, runs the thunk,
releases the lock, and returns the thunk's return value (a boolean).
Otherwise it returns false.  The \btlock{} function forces locks to be
properly nested.  This is important for our lock-free locks since the
thunk captures the code that might need to be helped by another
\btlock{}.  In Section~\ref{sec:locks} we describe a function that
avoids pure nesting and supports, for example, hand-over-hand locking.

The \texttt{find\_link} finds the first link with a key greater than or equal to the requested key.  It requires no locks.  The \texttt{find} just extracts the value from the link if the key matches.

The \texttt{remove} first finds the link \texttt{lnk} potentially containing the key.
If it does not contain the key, then it returns \texttt{false} indicating the key was not in the list.
Otherwise it tries to acquire a lock on the previous link (\texttt{prev}) and \texttt{lnk}.
If either fails because they are already locked, the condition on line~\ref{line:testacquired} will be false and the \texttt{while} loop will repeat.
The conditions on lines~\ref{line:removetest} and~\ref{line:sametest}  validate that 
the previous link has not been deleted, and \texttt{prev->next} still points to \texttt{lnk}.
If either test fails then the \texttt{while} loop is repeated.   If the tests pass,
the code in the lock loads the next pointer from \texttt{lnk}, marks \texttt{lnk} as removed, splices it out of the doubly linked
list, and retires its memory\footnote{\flock{} uses an epoch based memory manager.   The \texttt{retire} puts the pointer asside and frees
  it when it is safe (after all concurrent operations finish).}.
Note that a lock is not required on \texttt{next}.  This is because a deletion of \texttt{next} or an insertion of an element before \texttt{next} would
require a lock on \texttt{lnk} so it cannot happen concurrently.
The \texttt{insert} is similar to \texttt{remove}.

This locking-based code for doubly-linked lists is much simpler than any lock-free versions we know of~\cite{Gre02,ST08,shafiei2014non,disc2021mvgc,AH13}.  The difficulty in generating a lock-free version based on CAS is that lines~\ref{line:remnext}--\ref{line:remprev} need to be applied atomically, as do lines~\ref{line:insertnext}--\ref{line:insertprev}.  Our approach gives us a lock-free algorithm using the simple lock-based algorithm.  As we show in our experiments, the lock-free version is almost as fast as the locking one without oversubscription, but much faster with oversubscription.

	\section{Model}

We consider an asynchronous shared memory accessed by $n$ processes. Processes can access the shared memory via the following atomic primitives: \emph{read, write} and \emph{compare-and-swap (CAS)}, defined in the standard way.
We also assume a \emph{sysAlloc} which returns an unused block of memory and a \emph{sysRetire} which delays freeing the memory block until it is safe.
Each process also has access to private memory. 

An \emph{execution} is a sequence of \emph{steps}, where each step specifies a primitive, its arguments, its return values, and the executing process. The steps taken by a process in an execution implement \emph{operations}. An \emph{event} is the \emph{invocation} or \emph{response} of an operation, which specify its arguments and return values, respectively, as well as its calling process. 
The first step of an operation in an execution is associated with its invocation and its last step is associated with its response. A \emph{history} is a sequence of events, and can be derived from an execution $E$ by including the invocations and responses of operations in the order their associated steps appear in $E$.  
An execution is \emph{valid} if it is consistent with the semantics of the memory operations.

A data structure is a set of operations. Each operation is specified by a \emph{sequential specification}, which defines its expected behavior in an execution in which the executing process's steps are not interleaved with the steps of any other process. An implementation of a data structure specifies code for processes to run for each of its operations. 
An implementation of a data structure $D$ is \emph{lock-free} if, in any infinite execution in which processes follow this implementation, infinitely many operations complete. This is equivalent to requiring a finite number of steps between responses.

We say a memory location suffers from the \emph{ABA problem} in some implementation if it is possible for the value written on that memory location to go back to what it was at some previous point in some execution of this implementation. We say an implementation suffers from the ABA problem if there is some memory location that suffers from the ABA problem in that implementation. An  implementation is \emph{ABA-free} if it does not suffer from the ABA problem.

	\section{Idempotence}

To achieve lock-free critical sections, processes must be able to help each other. In particular, if some process holds a lock and crashes, others must be able to release the lock. Since it is possible that the crashed process has already begun its critical section, the other processes must complete its critical section for it before releasing the lock. 

This leads to the need to have \emph{idempotent} critical sections. Intuitively, a piece of code is idempotent if, when it is executed multiple times, it only appears to take effect once. 
Thus, if we have idempotent critical sections, processes can safely help execute someone else's critical section, without worrying about who else has also executed it.
Some code is naturally idempotent.  For example, a critical section that contains just one CAS instruction, which does not suffer from the ABA problem, is idempotent. After it is executed for the first time, subsequent executions of it would have their CAS fail, thus leaving the memory in the same state. Many hand-designed lock-free data  structures achieve their lock-freedom by allowing helping in such short, naturally idempotent sections.

In general, however, most code is not idempotent by default. For example, code incrementing a counter would yield different resulting counter values if it is executed several times. 
Thus, general lock-free constructions must be able to make general code idempotent. Several approaches in the literature have shown how to do so~\cite{turek1992locking,barnes1993method,ben2019delay,lltheory}. 
In this section, we define idempotence formally and present a new construction that makes any piece of code idempotent.

\subsection{Idempotence Definition}
A \emph{thunk} is a procedure with no arguments~\cite{ingerman1961thunks}. Note that any procedure with given arguments can be made a thunk by wrapping it in code that reads its  arguments from memory.

We follow the definition of idempotence introduced in~\cite{lltheory}.
A \emph{run} of a thunk $T$ is the sequence of steps on shared data taken by \emph{a single process} to execute or help execute $T$.
The runs of a thunk by different processes can be interleaved and each run may take a different branch through the thunk depending on the memory state that it sees.
A run is \emph{finished} if it reached the end of $T$.
\hedit{We say a sequence of steps $S$
	is \emph{consistent} with a run \emph{r} of \emph{T} if, ignoring process ids, $S$ contains the exact same steps as $r$.}
\hedit{We use $E~|~T$ to denote the result of starting from an execution $E$ and removing any step that does not belong to a run of the thunk $T$.}


\begin{definition}[Idempotence~\cite{lltheory}]
	\label{def:idempotence}
	A thunk $T$ is idempotent if in any valid execution $E$ consisting of runs of $T$ interleaved with arbitrary other steps on shared data, there exists a subsequence $E'$ of $E | T$ such that:
	\begin{enumerate}
		\item
		if there is a finished run of $T$ (response on $T$), then the last \hedit{step} of the first such finished run must be the end of $E'$,
		\item
		removing all of $T$'s steps from $E$ other than those in $E'$ leaves a valid history consistent with a single run of $T$.
	\end{enumerate}
\end{definition}

%

\Guy{This intuitive description no longer matches the definition.   And it still uses operation.}
Intuitively, this definition allows a thunk $T$ to be executed by several processes (in several runs of $T$), but other than one copy of each step executed for $T$, the rest \hedit{are not effectual (i.e. have no impact on the rest of the execution).
Furthermore, after one run of $T$ completes, no other runs of $T$ can execute an effectual step}.


We assume that a thunk may have \emph{thunk-local} memory which can only be accessed by processes executing the thunk. 
In our simulation the log is thunk-local.   Such memory is not ``shared data'' as defined in Definition~\ref{def:idempotence}.


\subsection{Our Approach to Idempotence}
We now present a new approach to achieving idempotence in any code that is ABA-free. We note that it is easy to make code ABA-free by attaching a counter to any memory location that suffers from the ABA problem, and updating that counter every time the value is updated (our implementation does this).  Rather than basing our idempotence construction on \emph{context saving}, as were previous general idempotence constructions, we base our approach on using a shared \emph{log}. We present pseudocode for the approach in Algorithm~\ref{alg:idempotent}.

\begin{figure}
\begin{lstlisting}
type Log = shared<entry>[logSize];
type Thunk = function with no arguments returning bool

private process local:
	Log* log;  // the current log for a process @\label{line:log}@
	int position; // the current position in the log @\label{line:count}@

struct descriptor:
	Log* log;
	Thunk thunk;
	mutable<boolean> done;

descriptor* createDescriptor(Thunk f):
	Log* log = allocate<Log>();
	return allocate<descriptor>(log, f, false);

void retireDescriptor(descriptor* T):
	retire<Log>(D->log);
	retire<descriptor>(D);

bool run(descriptor* D):
	Log* old_log = log; // store existing log and position
	int old_pos = position 
	log = D->log;  // install D's log @\label{line:initLog}@
	position = 0; @\label{line:initPosition}@
	bool returnVal = D->thunk(); // run thunk
	log = old_log; // reinstall previous log and position
	position = old_pos 
	return returnVal;

<V, bool> commitValue(V val):@\label{line:commitvalue}@
	if (log == null): return <val, true>; @\label{line:nolog}@
	bool isFirst = log[position].CAS(empty, val); @\label{line:commit}@
	V returnVal = log[position].read(); @\label{line:commit-load}@
	position++;
	return <returnVal, isFirst>;

struct mutable<V>:
	shared<V> val;
	V load():
		V v = val.read(); @\label{line:loadRead}@
		return commitValue(v).first;
	void store(V newV):
		V oldV = load();
		val.CAS(oldV, newV); @\label{line:storeCAS}@
	void CAM(V oldV, V newV):
		V check = load();
		if (check != oldV): return; @\label{line:CAMcheck}@
		val.CAS(oldV, newV); @\label{line:CAMCAS}@
		
V* allocate<V>(args): @\label{line:allocdef}@
	V* newV = sysAllocate<V>(args); //use system allocator @\label{line:allocate}@
	<obj, isFirst> = commitValue(newV);
	if not isFirst: sysFree<V>(newV); @\label{line:free}@
	return obj;
		
void retire<V>(V* obj) @\label{line:retiredef}@
	<_, isFirst> = commitValue(1);
	if isFirst: sysRetire(obj); @\label{line:retire}@
\end{lstlisting}
\caption{Idempotent primitives.   The \texttt{\bf entry}s of the log are assumed
to hold any type that fits in a word (or two if using double width CAS).  The log is of fixed sized, but
could grow by adding blocks as needed (see Section~\ref{sec:imp} for details).}
\label{alg:idempotent}
\end{figure}

We store each thunk in an struct, called the \emph{descriptor}, that includes the thunk itself, 
and a \emph{log}. 
The log keeps track of all values read, allocated or retired in any execution of the thunk. 
The \texttt{shared<T>} indicates a variable of type \texttt{T} that is shared among processes.
The log, however, is \emph{thunk-local};
any process executing this thunk uses the same descriptor struct, so the log is shared by all processes that execute this thunk,\footnote{Note that this differs from distributed logs used in, for example, optimistic transactional memory~\cite{KungR81}, where each process has its own log.  It also differs from logs used to commit successful
transactions.} but no other process can access this log.

We implement five operations for idempotent code: \emph{load}, \emph{store}, \emph{CAM} (a CAS that does not return any value), \emph{allocate}, and \emph{retire} using the memory primitives \emph{read}, \emph{write}, \emph{cas}, \emph{sysAlloc} 
and \emph{sysRetire}.
Any thunk can then be implemented using these operations.  For ease of use, load, store and CAM are implemented in a struct called \emph{mutable} that can wrap any type.  We call it mutable since these are locations shared among processes that can be modified, yet still have to be idempotent. Any variable declared as mutable automatically uses our implemented operations rather than the corresponding primitives, keeping programmer effort to a minimum. We assume that CAMs and stores cannot race on the same location.  Any non-mutable value, or any local variables/locations can be read and written as usual without using a \emph{mutable}.   For our purposes a value is \emph{non-mutable} (constant) if it is written once (e.g. on initialization) and only read after it is written.

The idea of the approach is that each process keeps track of its current log (line~\ref{line:log}) and how many items it has
logged in it so far while running the corresponding thunk (its \emph{position}, on line~\ref{line:count}). Thus, when it starts executing a new thunk, it initializes its position to 0 and its local log to point to this thunk's log (lines~\ref{line:initPosition} and~\ref{line:initLog}). The process saves its previous log and position so that it can go back to the previous thunk when it finishes executing the new one. This is useful for executing nested thunks.   Once a process has installed its new log and initialized its position, it can start running the thunk.  Whenever it executes a new loggable instruction (load, allocate or retire), it uses the shared log of the thunk to record the return value of this instruction and to see whether others have already logged it.

Values are stored in the log using a helper function called \emph{commitValue} (line~\ref{line:commitvalue}). 
This function takes in a value to be logged; intuitively, this is the intended return value of the current instruction.  The process uses its current position to index into its thunk's log. It tries to commit its value by using a CAS on \texttt{log[position]}, with old value \texttt{empty}, and new value equal to the value it would like to log. All log entries are initialized to \texttt{empty} and we assume that no process attempts to write \texttt{empty} into a mutable variable. The process then checks what value is written in \texttt{log[position]}, and returns this value, as well as a boolean indicating whether or not its CAS succeeded (i.e. whether it was the first to execute this instruction on this thunk). When the process does not currently have a log (i.e. is not currently executing a thunk), the \texttt{commitValue} function simply returns the input value and the success flag set to true (line~\ref{line:nolog}). 
With our locks this happens when the instruction is executed outside of all locks.  For example, no logging is
needed for the loads on line~\ref{line:loadFind} in Algorithm~\ref{alg:list} since they are not in a lock, but
the load on line~\ref{line:removetest} is logged in the descriptor for its surrounding lock.

To load a value from a given mutable variable, a process simply does a read from the location, and then tries to commit its value to the log by calling \texttt{commitValue}. The return value of the load (line~\ref{line:commit-load}) is the value returned by the  \texttt{commitValue} call. In this way, the process returns the same value from its load as any other process executing this load for this thunk.

To store a value in a given mutable variable, the process first executes a load as described above, thereby logging the value present before the store occurred, or discovering what that value was (if this store was already executed by a different process). The process then executes a CAS with old value equal to the value returned from the load. Recall that we assume that shared memory locations are ABA free, and therefore this ensures that all CAS attempts but the first will fail. The CAM operation works similarly to the store, but with an additional check to make sure the value returned by the load matches the expected value.
It only executes a CAS if this is the case.
By performing a load before the CAS, we guarantee that the expected value was stored in the memory location at some point.
Combined with the ABA-free assumption, this prevents a potentially dangerous scenario where the expected value is written into the memory location after the CAS, causing future executions of the CAM to no longer be idempotent.
It is important that the CAM does not return the return value of its CAS, since this value could be different for different processes that execute it, and could therefore violate idempotence (externalize a different result).

We also provide allocate and retire operations for idempotence. The idea is again to use the thunk log to commit values. To allocate a new object, the process allocates this object using the system-provided allocation mechanism, and then uses \texttt{commitValue} to install this new object in the log. If it is the first to do so, then the allocation is done, and this new object is returned. Otherwise, the process destroys its newly allocated object, and instead returns the object that was already installed in the log.

To retire an object, the processes use the log to compete for `ownership' of this object. The first process to commit a boolean retirement flag on the log is responsible for retiring this object. All other processes simply skip retiring it if they discover, by trying to commit a flag to the log, that some other process already owns this object. In this way, each object is retired at most once. Standard garbage collection techniques can then be used to collect retired objects when it is safe to do so.

The commitValue can also be used directly by the user to commit the result of any non-deterministic instruction.  For example, if there is an instruction that generates a value based on random noise in the processor,  this needs to be committed so all instances of the thunk agree on it.



\future{\Hao{low level operations (steps) consist of read, write, CAS, sysAlloc, sysRetire, sysFree. High level operations consist of load, store, CAM, allocate, free. val.CAS(), val.read()}}

We now show that our idempotence construction is correct; that is, \hedit{the mutable type implemented in Algorithm~\ref{alg:idempotent} is linearizable}, and any thunk that wraps all its mutable shared variables with the \texttt{mutable} type is idempotent.
\hedit{We begin by outlining a proof of idempotence.}
\hedit{For the following theorem, we relax Definition 1 so that retire operations in $E'$ are allowed to appear later than they would have in a single run of $T$.	
This has no effect on correctness and at worst it delays the reclamation of memory.
Our idempotence construction requires this relaxation because a process can go to sleep before peforming the \texttt{sysRetire} on line~\ref{line:retire}, and in the meantime, other processes can perform future operations of the thunk, making the retire appear out-of-order.
}
%

%


\future{\Hao{The alloc ensures that everyone has the same state after a big step. And the retire ensures that sysRetire is called exactly once.}}

\begin{theorem}
\label{theorem:idempotent}
	Replacing each \hedit{mutable shared} variable \hedit{accessed by} a thunk $T$ with a \texttt{mutable} and allocating and retiring all objects in $T$ with the provided allocate and retire operations yields an idempotent version of $T$.
\end{theorem}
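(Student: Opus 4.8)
The plan is to show that the shared log forces every run of $T$ to agree on the outcome of each loggable operation, so that all runs collectively realize one \emph{canonical} logical run, and then to exhibit $E'$ as the effectual shared-memory steps of that canonical run. First I would establish the central log invariant: since every entry starts as \texttt{empty}, is changed only by a successful CAS on line~\ref{line:commit}, and no thunk ever commits \texttt{empty}, each position holds at most one non-empty value, fixed forever once written. Hence \texttt{commitValue} returns the same value at a given position to every run that reaches it (line~\ref{line:commit-load}), with \texttt{isFirst} true for exactly one run. I would then argue by induction on the position that all runs agree: because control flow and the arguments to stores, CAMs, and allocates depend only on values returned by earlier loads (the committed values) and on immutable data, agreement on positions $0,\dots,i-1$ forces every run to perform the same operation at position $i$ and to commit or adopt the same value $v_i$. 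This defines the canonical run $\rho$, with which every actual run is consistent; note that a load returns its committed value even though the raw \texttt{val.read} on line~\ref{line:loadRead} may differ across runs, which is exactly the \texttt{mutable} linearizability claim.

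Next I would classify the shared-data steps and isolate the effectual ones. For a store or CAM, every run attempts \texttt{val.CAS} with the same committed old value $v_i$; by the ABA-free assumption $v_i$ never reappears once overwritten, and since the old value is committed \emph{before} the CAS, exactly one such CAS (the first) succeeds and is effectual while all others fail without modifying memory (lines~\ref{line:storeCAS} and~\ref{line:CAMCAS}; for a CAM the guard on line~\ref{line:CAMcheck} is evaluated identically by all runs). For \texttt{allocate}, the winning run's object is installed and returned, while each losing run immediately calls \texttt{sysFree} (line~\ref{line:free}), so an allocation by a loser paired with its free is non-effectual. For \texttt{retire}, only the first committer calls \texttt{sysRetire} (line~\ref{line:retire}), so each object is retired at most once; reads are non-modifying. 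I would then take $E'$ to consist, for each operation of $\rho$ in order, of its unique effectual step: a chosen \texttt{val.read} for a load, the successful \texttt{val.CAS} for an effectual store or CAM, the winning \texttt{sysAllocate}, and the winning \texttt{sysRetire}.

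Finally I would verify the two clauses of Definition~\ref{def:idempotence}. For clause~(2), every step of $E\mid T$ outside $E'$ is non-effectual---a failed CAS, a redundant read, a matched allocate/free pair, or a skipped retire---so deleting these preserves validity and leaves precisely the steps of $\rho$ in program order, a valid history consistent with a single run. For clause~(1), I would show that once the first finished run completes it has already attempted every store and CAM CAS and adopted every committed allocation, so by the ABA-free argument all effectual store, CAM, and allocate steps lie at or before its last step; placing that last step at the end of $E'$ satisfies the requirement, and the retire relaxation stated just before the theorem absorbs any \texttt{sysRetire} performed later by a straggler.

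I expect the main obstacle to be the effectual-step argument for stores and CAMs: proving that exactly one CAS per logical write takes effect requires combining the commit-before-CAS ordering with ABA-freedom to rule out the dangerous interleaving in which the old value is restored between a failed and a later CAS, and doing so robustly under arbitrary interleaving with other runs and with unrelated operations. The allocate/free cancellation and the out-of-order placement of retires are secondary but still need care to be stated as genuinely non-effectual under the relaxed definition.
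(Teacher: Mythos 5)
Your proposal is correct and follows essentially the same route as the paper's appendix proof: an induction showing the log forces all runs to agree on each operation's identity, arguments, and return value, followed by selecting the unique effectual step of each operation (first commit CAS, first \texttt{val.CAS} justified by ABA-freedom, winning \texttt{sysAllocate}/\texttt{sysRetire}) to form $E'$, with losers' steps shown non-effectual. The only detail to pin down is that the \texttt{val.read} placed in $E'$ for a load must be the one from the run that won the commit CAS (so that the read's return value matches the committed value and $E'$ is consistent with a single run); the paper handles this by taking a prefix of that winning run.
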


\begin{proof}
(Outline.)  A complete proof which lines up with our relaxed definition of idempotence is given in the appendix.   Here we give a brief outline.   The idea is that all processes running the same thunk (descriptor) will stay synchronized in the sense they will have the same state at the same point of their execution.   Whichever gets to a loggable event first will log it, and all others will see it is already logged and use the same value.   In this way, they all see the same values, and stay synchronized.   It also means their position in the log will be synchronized.    Memory allocation and retiring is safe since only the first will keep its allocated value and only the first will retire the value.   For stores and CAMs, only the first such operation will succeed and all others will fail, because of our ABA-free assumption.  Therefore only the first will be visible.
\end{proof}


Idempotent by itself does not guarantee that we are not over-allocating or double freeing.
To prevent memory leaks, every block of memory allocated on line~\ref{line:allocate} that does not get committed to the log is freed on line~\ref{line:free}.
We also use the shared log to ensure that each object is retired no more than once on line~\ref{line:retire}.

\hedit{To complete the correctness proof, we also need to show that \texttt{load}, \texttt{store}, and \texttt{CAM} are linearizable in executions where each instance is run only once.
Intuitively, this is because in the absence of repeated runs, the \texttt{load} operation simply reads and returns the variable \texttt{val}, and the \texttt{store} and \texttt{CAM} operations simply read \texttt{val} and try to update it with a CAS.
This is a well-known linearizable implementation \texttt{load}, \texttt{store}, and \texttt{CAM}/\texttt{CAS} using just \texttt{load} and \texttt{CAS}, and it is linearizable as long as \texttt{store}s and \texttt{CAM}s do not race.
}

%

As mentioned, most previous approaches to idempotence have been based on \emph{context saving}~\cite{ben2019delay,blelloch2018parallel,turek1992locking,barnes1993method,lltheory}.  This involves storing out a program counter and current state of all local variables at important events (e.g. shared memory operations), and possibly loading and installing a new context if already stored.  Our approach never needs to store a program counter or local state since the processes are running ``synchronously'' and have the same local state.  For large thunks, and frequent helping, however, our method potentially does have an additional cost.  In particular, we always start helping from the beginning of a thunk while the other methods will start at the point of the last context saved by any process.  Our method is therefore particularly well suited for short thunks, which is the intended use with fine-grained locks, and possibly not as well suited for long running thunks.

\section{Lockless Locks}
\label{sec:locks}

\renewcommand{\figurename}{Algorithm}
\begin{figure}
\begin{lstlisting}
struct lockDescr :
   descriptor* d;
   bool isLocked;

type Lock = mutable<lockDescr>;

bool runAndUnlock(Lock* lock, lockDescr descr):
	bool result = run(descr.d);
	descr.d->done.store(true);
	lock->CAM(descr, lockDescr(descr.d, false));

bool tryLock(Lock* lock, Thunk @$f$@):
	bool result = false;
	lockDescr currentDescr = lock->load(); @\label{line:readlock}@
	if (not currentDescr.isLocked) : @\label{line:checklock}@
		descriptor* myDescr = createDescriptor(@$f$@); @\label{line:create}@
		lockDescr myLockedDescr = {myDescr, true}; @\label{line:tag}@
		lock->CAM(currentDescr, myLockedDescr);  @\label{line:attempt}@
		currentDescr = lock->load(); @\label{line:reload}@
		if ((myLockedDescr.d->done).load() or  @\label{line:checkdone}@
		     myLockedDescr  == currentDescr) :  @\label{line:rereadlock}@
			result = runAndUnlock(myLockedDescr);  //run self @\label{line:run1}@
		else if (currentDescr.isLocked) :
			runAndUnlock(currentDescr); // help other @\label{line:help1}@
		retireDescriptor(myDescr); @\label{line:retiredesc}@
	else : runAndUnlock(currentDescr); // help other @\label{line:help2}@
	return result;

void unlock(Lock* lock): @\label{line:unlock}@
 	lockDescr currDescr = lock->load();
	lock->CAM(currDescr, lockDescr(currDescr.d, false));
 \end{lstlisting}
\caption{Idempotent TryLock}
\label{alg:trylock}
\end{figure}

We now describe how we implement a \trylock{}.  It is important that \trylock{}s can be nested, which means the locking mechanism itself must be idempotent or otherwise safe to use when there are multiple threads helping to acquire the lock.  In particular, consider an operation $O_1$ that takes an outer lock $L_a$ and inside the lock takes an inner lock $L_b$.  If another operation $O_2$ encounters $L_a$ locked, it will help $O_1$ execute its critical code.  This means it will help $O_1$ acquire $L_b$ and, if successful, run the code of $O_1$ in that lock. 

Based on our technique for idempotence, it turns out to be quite simple to implement the locking mechanism.  In particular, we simply need to ensure that the code for the locking mechanism is itself idempotent, so that helping it is safe.  Our code is given in Algorithm~\ref{alg:trylock}.  A lock descriptor (\texttt{lockDescr}) is represented as a pair of a pointer to a descriptor and a boolean indicating whether it is currently locked or not.  It is easy to put these into a single word by stealing a bit from the pointer.   A \texttt{Lock} is then a \texttt{mutable} lock descriptor.

An attempt at acquiring the lock starts by reading the lock and checking if it is currently locked.  If not locked, the algorithm creates a descriptor for the thunk $f$ (line~\ref{line:create}) and tags it to mark that it is locked (line~\ref{line:tag}).  It then attempts to install the descriptor on the lock using a CAM (we do not have a CAS for mutables).  Since the CAM does not return whether it succeeds, the algorithm needs to read the lock again (line~\ref{line:rereadlock}) to check if successfully acquired.  If acquired
or if previously acquired and now done, it runs the code and releases the lock (line~\ref{line:run1}).  
If not acquired but \texttt{currentDescr} is locked, then the algorithm helps the descriptor on the lock and unlocks
it (line~\ref{line:help1}).
Whether the CAM was successful or not, \texttt{myDescr} needs to be retired (line~\ref{line:retiredesc}).  If on line~\ref{line:checklock} the lock is already locked, then the algorithm helps the descriptor on the lock
and unlocks it (line~\ref{line:help2}).  Finally the \texttt{result} is returned, which will only be true if the lock was successfully acquired and the thunk $f$ returns true.

We now argue correctness.  We say a \trylock{} is correct if it either fails, in which case none of the critical code (thunk $f$) is run and it returns false; or it succeeds, in which case all its critical code is run and the \trylock{} returns its value.
If successful, no \hedit{other} critical code on the same lock can run concurrently.  By this definition, the \trylock{} could always fail, but this would not satisfy progress bounds, and in particular for us, our lock-free bounds.
We say a successful \trylock{} \emph{enters} on the step the lock is changed to point to its descriptor and \emph{exits} on the step when the lock is changed from locked with its descriptor to unlocked.    

\begin{theorem}
\label{theorem:trylock}
  The \trylock{} in Algorithm~\ref{alg:trylock} is correct as long as \texttt{run(descriptor)} runs the user code in the thunk $f$ idempotently, and the operations on a Lock (\texttt{load}, \texttt{CAM} and \texttt{store}) and on descriptors (\texttt{createDescriptor} and \texttt{retireDescriptor}) are idempotent.
\end{theorem}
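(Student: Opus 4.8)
The plan is to fix the reading of \emph{successful} as ``the invocation's CAM on line~\ref{line:attempt} is the step that installs its (locked) descriptor,'' so that \emph{successful} coincides with performing the \emph{enter} step defined above, and \emph{failed} means this CAM does not install the descriptor (or is never reached). I would first establish two structural invariants on the lock's value. Since \texttt{createDescriptor} (line~\ref{line:create}) returns a fresh descriptor, the value \texttt{\{myDescr, true\}} is unique to one invocation and can be installed only by that invocation's own CAM on line~\ref{line:attempt}. Moreover the lock's only transitions are unlocked $\to$ \texttt{\{d, true\}} (an enter CAM) and \texttt{\{d, true\}} $\to$ \texttt{\{d, false\}} (the unlock CAM inside \texttt{runAndUnlock}, the exit). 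Together with the ABA-freedom of the Lock, this means \texttt{\{myDescr, true\}} is installed at most once and, once the lock moves off it, never returns to it; this is precisely what lets the algorithm tolerate a CAM that does not report success.

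For the failure case, a failed invocation either never reaches line~\ref{line:attempt} (the lock read as locked on line~\ref{line:checklock}, so it only helps the installed descriptor on line~\ref{line:help2}), or its CAM fails to install \texttt{myLockedDescr}. In the latter case I would argue its \texttt{done} flag can never become true: \texttt{done} for \texttt{myDescr} is written only inside \texttt{runAndUnlock(myLockedDescr)}, and by the invariant above no other process ever finds \texttt{myLockedDescr} on the lock, so no helper invokes it; and since the invocation itself has not yet reached line~\ref{line:run1}, \texttt{done} is still false when it evaluates line~\ref{line:checkdone}. As \texttt{myLockedDescr} $\neq$ \texttt{currentDescr} there as well, it takes the help branch (or falls through) and returns the initial \texttt{result = false} without ever running its own thunk $f$.

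For the success case I would show a successful invocation always reaches the run-self branch (line~\ref{line:run1}). The reload on line~\ref{line:reload} either returns \texttt{myLockedDescr}, satisfying the condition directly, or a different value; by the transition invariant the lock can leave \texttt{\{myDescr, true\}} only through the unlock CAM, and \texttt{runAndUnlock} stores \texttt{true} into \texttt{done} \emph{before} issuing that CAM, so a reload that sees a changed value is preceded by \texttt{done} being set, whence the subsequent \texttt{done} read on line~\ref{line:checkdone} returns true. Either way it runs \texttt{runAndUnlock(myLockedDescr)}; idempotence of \texttt{run} makes the thunk take effect once and returns its boolean as the tryLock's result. The same invariant shows a slow owner's late \texttt{runAndUnlock} is harmless, since its unlock CAM expects \texttt{myLockedDescr}, which never recurs, and so cannot unlock a later acquisition. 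Mutual exclusion then follows: throughout the interval from a successful invocation's enter to its exit the lock holds \texttt{\{myDescr, true\}}, so any other invocation's CAM (which expects an unlocked value) can only install outside this interval, making the locked intervals of distinct successful invocations disjoint; during the interval the only thunk run by the owner or any helper is \texttt{myDescr}'s, and idempotence collapses these runs into one.

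I expect the main obstacle to be the success case around lines~\ref{line:reload}--\ref{line:checkdone}: justifying that a CAM which cannot report its own outcome is nevertheless correctly disambiguated by combining the \texttt{done} flag with the re-read. This hinges delicately on two facts working in concert --- the ABA-freedom of the lock (so that \texttt{myLockedDescr} never reappears and stale CAMs and comparisons cannot misfire) and the store-\texttt{done}-before-unlock ordering inside \texttt{runAndUnlock} --- and on pinning down the exact happens-before relation between the reload of the lock on line~\ref{line:reload} and the read of \texttt{done} in the disjunction on line~\ref{line:checkdone}.
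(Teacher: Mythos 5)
Your proposal is correct and follows essentially the same route as the paper's own (outline) proof: success and failure are distinguished by whether the CAM on line~\ref{line:attempt} installs the fresh descriptor, the \texttt{done}-flag-plus-reload check on lines~\ref{line:checkdone}--\ref{line:rereadlock} is justified by \texttt{done} being set before the unlock CAM together with the lock never returning to \texttt{\{myDescr, true\}}, and mutual exclusion plus harmlessness of leftover helper runs follow from the uniqueness of the installed descriptor and idempotence of the thunk. You make the lock-transition and descriptor-uniqueness invariants more explicit than the paper's outline does, but the decomposition and the key observations are the same.
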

\begin{proof}
  (Outline).  The code in a thunk consists of the user level code and possibly the code of a nested \trylock{}.  Together this is idempotent by assumption.  
  
  In the algorithm, a descriptor is run if and only if the \trylock{} enters and the lock is set.
  The descriptor is run by the \texttt{runAndUnlock} method which can be called on line~\ref{line:run1} by the process that installed the descriptor, or on lines~\ref{line:help1} or~\ref{line:help2} by the helping processes. 
  Some process will finish the thunk first (either the primary process or a helper).  Since the thunk is idempotent, any processes working on the same descriptor after that point will have no effect.  The lock is only released after the thunk is first finished so the code can only have an effect between when the successful
\trylock{} enters and exits.  Since there is a unique descriptor on the lock during this time, no other thunk on the same lock can appear to run concurrently (there could be leftover thunks from earlier successful attempts on the lock, but they will have no effects).

 If either the lock was already taken on line~\ref{line:readlock} (i.e. the check on line~\ref{line:checklock} fails) or the attempt to install a descriptor was unsuccessful on line~\ref{line:attempt} (i.e. the check on line~\ref{line:rereadlock} fails), then the \trylock{} fails and returns false.
 Otherwise, its descriptor was successfully installed, and it returns the result of running that descriptor on line~\ref{line:run1}.
 Note that it is important to check the descriptor's done flag on line~\ref{line:checkdone} because even when the descriptor is successfully installed on line~\ref{line:attempt}, the \texttt{load} on line~\ref{line:reload} might not see it because it might have been helped and replaced by another process.
 Checking the done flag ensures that the \trylock{} will always return the return value of the descriptor it installed if its CAM on line~\ref{line:attempt} is successful.
 \end{proof}
The theorem does not depend on a particular implementation of idempotence, but works with ours since ours satisfies
the specified conditions.

We now show that \trylock{}s are lock-free.  For this purpose we make some assumptions.  Firstly, we assume the locks have a partial order $<_p$, and that when nesting locks they are acquired in decreasing order.  This is a relatively standard assumption for lock-based algorithms since it prevents lock-cycles and deadlock.  In lists and trees, the ordering can be implied by the ordering in the tree.  Secondly, we assume that each \trylock{} includes at most one other \trylock{} directly inside of it.  Note that this still allows arbitrary depth of nesting since the one inside can itself contain another lock inside it.\footnote{We expect this requirement is not necessary, but our proof relies on it and it is true for all our \trylock{}-based data structures.}  We refer to locks that satisfy these two conditions as \emph{simply nested}.  As with some other lock-free mechanisms~\cite{shavit1997software,fraser2007concurrent} we also assume the number of locks (or in their case, memory locations) is bounded.
We say that a simply nested \trylock{} \emph{\recsucceed{}s} if it acquires its lock as well as all locks nested inside of it.  Note that for a \trylock{} 
$a$ if any one \trylock{} nested in it \recsucceed{}s then they all do, including $a$.

We also need to bound the time of user code in a lock, otherwise helpers could never complete helping.  We defined \emph{step count} for a \trylock{} as the number of user steps taken by its critical region.  We count all functions in the idempotent interface as unit cost plus \hedit{the cost of} any user code inside of them---in particular \texttt{sysAllocate} and \texttt{sysRetire} count toward user code.  We count a nested \texttt{try\_lock} as unit cost plus the step count of the code in its critical region.


\begin{theorem}
\label{theorem:lockfree}
Consider an algorithm using simply nested \trylock{}s for which the maximum step count for any \trylock{}, not including helping, is bounded.   In such an algorithm, a \trylock{}, including any helping it does, will run in bounded steps, and for every bounded
number of \trylock{} attempts at least one top-level \trylock{} will \recsucceed{}.
\end{theorem}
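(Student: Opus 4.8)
The plan is to prove the two halves separately, relying on three structural facts that follow from the \emph{simply nested} assumptions: (a) because nested locks are acquired strictly downward in the partial order $<_p$ and the number of locks is bounded, every chain of nested locks has length at most some bound $d$; (b) because each \trylock{} contains at most one nested \trylock{}, the pattern of ``running and helping'' a descriptor unrolls into a single chain rather than a branching tree; and (c) a \trylock{} \emph{enters} a lock exactly when its descriptor is installed, and since the innermost lock of a chain is acquired last, entering an innermost lock means the entire enclosing chain has been acquired, so its top-level \trylock{} \recsucceed{}s. Throughout I would use Theorem~\ref{theorem:trylock} (so that all helpers agree on the same descriptors and the critical code takes effect at most once) together with the ABA-free assumption.

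For the bounded-steps half, I would first observe from the code that a single \trylock{} call invokes \texttt{runAndUnlock} at most once (the branches on lines~\ref{line:run1}, \ref{line:help1}, and~\ref{line:help2} are mutually exclusive) and that \texttt{runAndUnlock} runs the thunk exactly once. A thunk performs a bounded number of user steps (its bounded step count) plus at most one nested \trylock{} call, so by fact~(b) the whole computation is a single chain of \trylock{} invocations of depth at most $d$ by fact~(a). Summing a bounded per-level cost over at most $d$ levels gives a bounded total. I would package this as a \emph{completion lemma}: once a descriptor is installed on a lock, some process finishes its thunk and releases the lock within a bounded number of steps.

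For the progress half I would argue by contradiction, assuming an infinite suffix $\sigma$ in which infinitely many top-level \trylock{}s are attempted but none \recsucceed{}s. The two ingredients are: \textbf{(A)} if a lock $L$ is attempted in $\sigma$ then $L$ is entered in $\sigma$ --- if the read on line~\ref{line:readlock} sees $L$ locked then someone entered it, and by the completion lemma that entry was only boundedly earlier, hence still inside $\sigma$; while if it sees $L$ unlocked then the \texttt{CAM} on line~\ref{line:attempt} either installs its own descriptor or fails because a concurrent descriptor was installed, and by ABA-freedom an unlocked lock can only change via an install. And \textbf{(B)} entering an innermost lock is a \recsuccessful{} acquisition by fact~(c), which is excluded in $\sigma$. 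From (A) and (B), no innermost lock is attempted in $\sigma$. I then induct upward on the nesting height: if every lock nested inside $L$ is never attempted in $\sigma$, then $L$ is never entered in $\sigma$ (otherwise the completion lemma would run its thunk and attempt its nested lock inside $\sigma$), and hence by the contrapositive of (A), $L$ is never attempted in $\sigma$. Propagating this up every chain shows that \emph{no} lock is attempted in $\sigma$, contradicting the infinitely many top-level attempts. Combining with the first half --- each attempt costs boundedly many steps --- converts this into the stated ``bounded number of attempts'' bound.

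The main obstacle is the progress argument, and specifically the fact that a non-innermost lock can be entered \emph{without} a \recsuccessful{} outcome: a \trylock{} may acquire its outer lock, fail on the nested lock, release, and retry indefinitely. This defeats any naive ``an entry is progress'' argument and is exactly why the proof must push the contradiction down to the innermost locks, where entering \emph{does} force recursive success, and then climb back up via the ``attempted $\Rightarrow$ entered'' implication of~(A). A secondary subtlety is the bookkeeping that keeps attempts and their forced entries inside the suffix $\sigma$; this is where the completion lemma's bounded-time guarantee is essential, since it certifies that a lock observed locked within $\sigma$ was entered only boundedly earlier and hence still within $\sigma$.
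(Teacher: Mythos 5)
Your first half (bounded steps including helping) is essentially the paper's argument: each \trylock{} invokes \texttt{runAndUnlock} at most once, each level recurses into at most one further \trylock{} on a strictly $<_p$-smaller lock (whether via its own nested lock or via the helped descriptor's nested lock), and a bounded lock set bounds the depth of that chain. The progress half, however, has a genuine gap in claim~(A). Your ``completion lemma'' only guarantees that a process \emph{that runs} an installed descriptor finishes it within a bounded number of \emph{its own} steps; it does not guarantee the lock is released within a bounded number of global steps after installation. A process can install a descriptor on $L$ and then take no further steps, and if nobody else touches $L$, the lock stays held indefinitely. So an attempt inside your suffix $\sigma$ that observes $L$ locked may be witnessing an entry that predates $\sigma$: ``that entry was only boundedly earlier, hence still inside $\sigma$'' is false. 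The same failure breaks your inductive step, since $L$ can be entered in $\sigma$ by a process that stalls unhelped, so no lock nested inside $L$ is ever attempted in $\sigma$ even though $L$ was entered. Both steps are repairable --- only boundedly many attempts on $L$ can observe any one stale descriptor (the first helper unlocks $L$ within its own bounded steps), and at most $n$ entries can be orphaned by permanently stalled installers --- but this forces you to restate (A) and the induction in terms of ``attempted/entered infinitely often'' and to account for the finitely many exceptions. Separately, the final conversion from ``no infinite failing suffix'' to the theorem's quantitative ``one success per bounded number of attempts'' is not automatic and needs its own counting step.

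For comparison, the paper avoids the contradiction entirely with a direct charging argument. It defines a \emph{helps} relation under which a failing \trylock{} is credited either with the descriptor it actually runs (lines~\ref{line:help1},~\ref{line:help2}) or --- crucially --- with the operation that entered and exited the lock between its read on line~\ref{line:readlock} and its reload on line~\ref{line:reload} (the case where the CAM fails yet the lock is observed unlocked again). Every \trylock{} that helps no one \recsucceed{}s, every \trylock{} helps at most one other, and locks strictly decrease along helping chains, so every chain ends at a \recsuccessful{} \trylock{} within boundedly many links; counting the attempts chargeable to one success gives the quantitative bound directly. Your case analysis in (A) does cover the entered-and-exited-in-between situation, so the substance is present; the paper's formulation simply never needs to locate the witnessed entry in time, which is exactly the step where your argument breaks.
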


\begin{proof}
We say a \trylock{} $a$ with a nested (not necessarily directly nested) \trylock{} $b$ \emph{helps} another \trylock{} $c$ if (1) line~\ref{line:help1} or line~\ref{line:help2} of $b$ runs the thunk installed by $c$
or (2) \texttt{currentDescr} on line~\ref{line:reload}  of $b$ is unlocked and belongs to $c$.  In the second
case, $a$ does not actually help $c$, but $c$ must have acquired its lock after
line~\ref{line:readlock} of $b$ and released it before line~\ref{line:reload} of $b$ so we can give $a$ credit for helping. 
Importantly, by this definition, if a \trylock{} performs no helping, then it \recsucceed{}s.
Due to idempotence and simple nesting of locks, every \trylock{} helps at most one other. 

Now note that if $a$ helps $b$ due a conflict on lock $l_1$ and $b$ helps $c$ due to a conflict on lock $l_2$ the \trylock{} that attempts $l_2$ is nested inside the \trylock{} that acquired $l_1$.  Since simply nested locks are acquired in the partial order $>_p$, we have that $l_1 >_p l_2$ and more generally that locks decrease along any chain of helping.  Assuming a bounded number of locks, the chain will have bounded length and end with a \recsuccessful{}
\trylock{}.  Hence, running any \trylock{} takes bounded steps (including helping).  Furthermore, since there are a bounded number of locks on the chain, the number of \trylock{}s responsible for completing the last one is also bounded.  
Finally we note that although the last one might not be top-level, the fact it recursively succeeds implies the top-level \trylock{}
that contains it recursively succeeds.
 \end{proof}

This theorem indicates that simply nested \trylock{}s are lock-free in that a \trylock{} must succeed in a finite number of steps. 
It does not, however, imply wait-freedom since a particular process could continuously fail to acquire a lock.  It also does not, by itself, guarantee an algorithm using simply nested \trylock{}s is lock-free.  In an algorithm based on optimistic fine-grained locks, for example, we might need to retry not because a lock failed to be acquired, but instead because some consistency check failed (e.g. the test on line~\ref{line:removetest} of Algorithm~\ref{alg:list}).  In all the algorithms we consider, however, a consistency check can only fail if the algorithm has made progress.  In the \texttt{remove} from Algorithm~\ref{alg:list}, for example, the consistency check can only fail if in between the \texttt{find\_location} and when the lock on \texttt{cur} is acquired, either (1) \texttt{cur} is deleted or (2) it is updated to point to a new \texttt{next}.  In either case, the algorithm has made progress by completing an operation.  A similar argument can be made for the \texttt{insert}. 
Therefore the ordered list algorithm based on our \trylock{}s is lock-free, as are the other algorithms we consider.


It can be useful to release a lock early before the scope of the thunk associated with the acquired lock completes.   We supply a \texttt{unlock} for this purpose.   It takes a lock that is currently acquired by the thread and unlocks it.   Its behavior is undefined if the thread has not acquired the lock.   As mentioned in the introduction, this can be used for hand-over-hand locking (also called lock-coupling)~\cite{Bayer88}.

The code for \trylock{} can be modified to support a strictLock that always acquires the lock before returning, by first creating the descriptor, and then putting the attempt to acquire a lock into a while loop.  We have implemented an optimized version of such a strictLock and compare it to the \trylock{} in Section~\ref{sec:exp}.
We note that this implementation of strict locks is not simply nested so is not covered by Theorem~\ref{theorem:trylock}.  However, it should be possible to adapt TSP's proof~\cite{turek1992locking} to show that strict locks are lock-free.

	\section{Related Work}

As mentioned, the idea of lock-free locks was introduced by Turek et al~\cite{turek1992locking} and Barnes~\cite{barnes1993method}.  The idea of helping dates back earlier, at least to Herlihy's work on wait-free simulations~\cite{waitfree91}.  Many wait-free and lock-free algorithms achieve their progress guarantees by allowing processes to safely \emph{help} each other complete their operations, although in quite specific ways instead of using a general mechanism. Help used for wait-free progress was formally studied by Censor et al.~\cite{censor2015help}.

The idea of \emph{idempotence} has been used in the literature a variety of contexts~\cite{idempotence02,idempotence12,idempotence13,blelloch2018parallel,lucia2015simpler,KOEFV06,blelloch2018parallel,ben2019delay}.  Kruijf, Sankaralingam and Jha~\cite{idempotence12} give a nice overview although only up to 2012.   More
recent work has focused on using idempotence for fault tolerance (e.g., \cite{lucia2015simpler,blelloch2018parallel,ben2019delay}).
All these approaches rely on some form of  ``context saving''.
Idempotence has also been considered and characterized in the literature under different names. Timnat and Petrank~\cite{timnat2014practical} define a similar notion known as \emph{parallelizable} code, which intuitively allows several processes to execute it without changing its effects.   

In recent work, Ben-David and Blelloch in~\cite{lltheory} use a randomized implementation of lock-free locks to show that when point contention on locks in constant, then operations can be completed in constant expected time.  We use their definition of idempotence in this paper.  However, their focus is on theoretical efficiency and fairness guarantees of acquiring the locks, whereas in this paper we focus on the practicality of the approach.  As with previous approaches to idempotence their approach relies on context saving.


Approaches for achieving idempotence and lock-freedom sit on a spectrum of generality. The focus of this paper is to improve the practicality of the far side of the spectrum; fully general idempotence/lock-free constructions. However, many other approaches exist, which are less general but can be more efficient for their specific applications. For example, on the other end of the spectrum are hand-designed lock-free data structures. These data structures are often designed to be able to have `critical sections' that contain just one CAS instruction, and can therefore be executed atomically in hardware with no locks. For example, Michael and Scott's queue~\cite{michael1996simple} allows new nodes to be enqueued by swinging a single pointer. Idempotent help is given by later updating the tail pointer. Similar algorithms, like  Harris's linked-list~\cite{harris2001pragmatic} and Natarajan and Mital's BST~\cite{Natarajan14}, make use of descriptors to allow others to help, but these descriptors are optimized to simply be flags. These approaches yield very fast lock-free data structures, but are difficult to generalize. 

A middle-ground between generality and efficiency is found with approaches that implement useful primitives for lock-freedom. For example, Brown et al~\cite{BrownER13} introduce the LLX/SCX primitive, which allows atomically checking that several locations have not changed their values, `freezing' some of them, and modifying one of them. This primitive can be seen as a lock with a restricted critical section. Another example of such a primitive is multi-word CAS, which allows several memory locations to be CASed atomically~\cite{guerraoui2020efficient,feldman2015wait,harris2002practical}. 

Some work aims at achieving practical lock-free locks but only partially solve the problem.
Rajwar and Goodman describe a hardware-based technique that are lock-free under an assumption that processes do not fail or stall during certain critical regions~\cite{RajwarG02}.   We assume a process can fail or stall at any instruction.
Gidenstam and Papatriantafilou~\cite{GidenstamP07} look at how to make the handoff of locks lock-free (i.e., waking up threads suspended on a lock in a lock-free manner), but a thread blocked during a lock will still delay any waiting threads indefinitely.


	\section{The \flock{} Library}
\label{sec:imp}

We have implemented a C++ library, \flock{}, based on our \locklesslock{}s approach.  It supports a \texttt{mutable\_} wrapper to use on any shared values that can be mutated inside a lock, a \texttt{lock} type and a \texttt{try\_lock}.  The \texttt{mutable\_} wrapper has a similar interface to the C++ \texttt{atomic} wrapper.  In particular, it supports \texttt{load}, \texttt{store} and \texttt{cam}.  The assignment operator (\texttt{=}) is overloaded to store.  \flock{} also supports \texttt{allocate} and \texttt{retire} which are integrated with its epoch-based collector.  An example of how to use \flock{} is given in Algorithm~\ref{alg:list}.
The library is available at \url{https://github.com/cmuparlay/flock}.

Here we discuss several specifics about the implementation, including some optimizations.

\myparagraph{Epoch-based collection} 
\flock{} uses an epoch-based memory manager~\cite{fraser2004practical,mckenney2008rcu}.  In such a memory manager, each operation runs in an epoch, each of which is associated with an integer that increases over time.  Managing memory with epochs requires some additions to the the implementation of idempotent code.  In particular, when a thread helps another thread, it is taking on the responsibility of that other thread.  It therefore needs to also take on its epoch number.  
To implement this, when \flock{} has to help inside of a \texttt{try\_lock}, it changes its epoch to be the minimum of its epoch and the epoch of the thunk it is helping.  When it is finished helping, it restores its epoch to what it was before helping.
The descriptors are also allocated and retired with the same epoch-based collector, with one optimization.  In particular if
a descriptor is never helped, which is the common case, then it can be reused immediately instead of being retired.   To implement this, we keep a flag on the descriptors which is set when helping.  This requires some careful synchronization.

\myparagraph{ABA}
Although the idempotent implementation in Algorithm~\ref{alg:idempotent} requires that mutables are ABA free, a \texttt{mutable\_} in \flock{} does not have this requirement.  To allow for this, \flock{} keeps tags on mutable locations.  A simple implementation is to use a 64-bit counter, and increment the counter on each update.  Assuming mutable values can be up to 64-bits, this can be implemented with double-word (128-bit) loads and CASes.  Unfortunately double-word loads are particularly expensive on current machines.  
\flock{} has two optimizations to avoid them, one which supports 64-bit values, and one for 48-bit values, which is sufficient for a pointer.  

The first optimization still uses a 64-bit counter on every mutable, but avoids any double-word loads.  A key observation is that a \texttt{load} only needs to log the value, and therefore only needs to read this value.  Another observation is that a \texttt{store} (or \texttt{cam}) does not need to read the counter and value atomically.  Instead, it can first read the counter and then the value, followed by a double-word CAS to the mutable.  This is safe since the value can only change if the counter changes.

The second optimization avoids the extra 64-bit counter on each mutable location and any double word operations.  Instead it uses a safe lock-free approach that only requires a 16-bit tag.  The full description is beyond the scope of this paper, but roughly it uses an announcement array to ensure that wrapping around is safe---i.e., it never uses a tag that is announced.  All the experiments in Section~\ref{sec:exp} use this version since the mutables are no larger than a pointer.

\myparagraph{Constants and Update-once Locations}  Shared, constant locations do not need to be wrapped in a \texttt{mutable\_} and can just be read directly.    A constant location is one that is written once and only read after it is written.  The write could happen during construction of the object that contains it or after.   For example the key and value in the list link in
Algorithm~\ref{alg:list} are constants.   \flock{} also supports update-once locations.  These are locations that have an initial value, and are updated at most once.  Reads can happen before or after the update.    The \texttt{removed} flag in a link in
Algorithm~\ref{alg:list}, for example, is updated once.    Update-once variables are ABA free and therefore do not need a tag.  Furthermore, the \texttt{store} can be implemented with a simple write instead of a \texttt{load} and then a \texttt{CAS}.  This is because only the first such write will have an effect.

\myparagraph{Arbitrary Length Logs} In general it cannot be determined ahead of time how long a log will be.  \flock{} therefore implements logs that can dynamically increase in size.  In the implementation, a log has a fixed block size (7 by default).  If it runs out, another block is allocated.  To do this idempotently, the first thread that runs out allocates the block and attempts to CAS it into a next-block pointer.  If it fails, it frees its block and takes on the block that succeeded.

\myparagraph{Avoiding CASes} We found that one of the most expensive aspects of helping is contention due to CASes on both the log and mutable locations.  This is especially true under high contention when there is a lot of helping.  To significantly reduce this contention we use a compare-and-compare-and-swap.  In particular, before doing a CAS, the location is read and compared against the expected value, and if not equal the CAS can be avoided.  When helping under high contention it is often not equal (someone else already executed the CAS) so many of the CASes are avoided.  This rather simple change made a significant improvement in performance under high contention---sometimes a factor of two or more.

\myparagraph{Capturing by Value}  In the code in Algorithm~\ref{alg:list}, one might notice the ``\texttt{[=]}'' in the definition of the lambda's.  This indicates that all free variables in the lambda defined outside of it are captured by value, as opposed to by reference---i.e., they are copied into the thunk.  This is important since the lambda might outlive its context, and any surrounding stack allocated values could be destructed while being helped.  Indeed if the \texttt{[=]} is replaced by \texttt{[\&]} (by reference), Algorithm~\ref{alg:list} would be incorrect---for example, the variable \texttt{prev} on line~\ref{line:loadRemove} could be reused while the lambda is being helped.   

\section{Data Structures}
\label{sec:ds}

We have implemented several concurrent data structures using \flock{}.
These data structures include the doubly linked list described in Section~\ref{sec:listexample} (\dlist), a singly-linked list~\cite{lazylist06} (\lazylist{}), an adaptive radix tree~\cite{art13,arttre16} (\arttree{}) which is a state-of-the-art index data structure used in the database community, a separate chaining hashtable (\hashtable{}), a leaf-oriented unbalanced BST (\leaftree{}), a leaf-oriented balanced BST (\leaftreap{}) with an optimization that stores a batch of key-value pairs (up to 2 cachelines worth) in each leaf to minimize height, and an (a,b)-tree (\abtree{}).
To support concurrent accesses, the data structures use fine-grained, optimistic locking, as in ~\cite{KungL80,arttre16,lazylist06,bronson10}.
This approach involves (1) traversing the data structure without any locks, (2) locking a neighborhood around the nodes you wish to modify, (3) checking for consistency, and (4) performing the desired modifications.
If the consistency check fails, locks are released and the operation restarts.
Read-only operations do not take any locks.

We implement a \trylock{} and a \strictlock{} version of each data structure. Both \trylock{} and \strictlock{} can either be lock-free (with helping and logging) or blocking (using test-and-test-and-set locks), and with our library, this choice can be made by changing a flag at runtime.

To the best of our knowledge, this results in the first lock-free implementation of an adaptive radix tree. In many workloads, our lock-free \arttree{} significantly outperforms the other lock-free ordered set data structures that we ran.
Our implementations of these optimistic, fine-grained locking data structures are available at \url{https://github.com/cmuparlay/flock}.


\renewcommand{\figurename}{Figure}

\section{Experimental Evaluation}
\label{sec:exp}

Our experimental evaluation has two main goals: first, to compare the performance of \locklesslock{}s with \blocklock{}s 
and second, to compare data structures written with \locklesslock{}s with state-of-the-art alternatives.


\myparagraph{Setup}
Our experiments ran on a 72-core Dell R930 with
4x Intel(R) Xeon(R) E7-8867 v4 (18 cores, 2.4GHz and 45MB L3 cache),
and 1Tbyte memory. 
Each core is 2-way hyperthreaded giving 144
hyperthreads. 
The machine's interconnection layout is
fully connected so all four sockets are equidistant from each other.
We interleaved memory across sockets using \texttt{numactl -i all}.
The machine runs Ubuntu 16.04.6 LTS.
We compiled using g++ 9.2.1 with \texttt{-O3}.
We used ParlayLib~\cite{blelloch2020parlaylib} for scalable memory allocation.
\Guy{I think I took the padding out.  It did not seem to make a difference.  Could possibly drop this sentence.}
\Hao{dropped last sentence}


\myparagraph{Workloads}
We experiment with set data structures supporting \op{insert}, \op{delete} and \op{lookup} with 8-byte keys and 8-byte values.
Our experiments follow a similar methodology to previous papers~\cite{DGT15,Brown18}.
We first pick a key range $[1, r]$ and prefill the data structure with half the keys in the range.
Then each thread performs a mix of \op{lookup} and update operations, where update operations are evenly split between \op{insert}s and \op{delete}s, keeping the data structure size stable throughout the run.
Each experiment is run for 3 seconds (sufficient for reaching a stable state) and repeated 4 times.
The first run is a warmup run and an average of the last 3 runs is reported.
Before the warmup run, we shuffle the ParlayLib memory allocator by allocating a large number of nodes and freeing them in a random order to increase consistency across runs.
\hedit{Standard deviation between runs is small enough that the error bars in our graphs are only visible for a small number of data points.}
\Guy{Do we have error bars?   Do you mean ``would not be visible''?}
\Hao{We do have error bars. Reworded the previous sentence to mention this.}

All keys are randomly chosen from the range $[1, r]$ according to a zipfian distribution parameterized by $\alpha$.
Zipfian with $\alpha = 0$ is identical to the uniform distribution and 
higher $\alpha$ skews accesses towards certain ``hot'' keys, which is more representative of real-world workloads.
The zipfian distribution is also used in the YCSB benchmark suite, which mimics OLTP index workloads~\cite{cooper2010benchmarking}.
We mostly run with 5\% and 50\% updates, following YCSB Workloads B and A, respectively.

Our experiments vary four parameters: data structure size, update rate, $\alpha$, and number of threads.
We show graphs along each of the dimensions, fixing the other three. Since \arttree{} is a trie data structure, it benefits heavily from densely packed keys, so we sparsify the key range by hashing each key from $[1, r]$ to a 64-bit integer. This does not affect the other data structures since they either are purely comparison based or hash the keys themselves.

\begin{figure}
	\centering
	\includegraphics[width=0.42\textwidth,trim={0 0.2cm 0 0},clip]{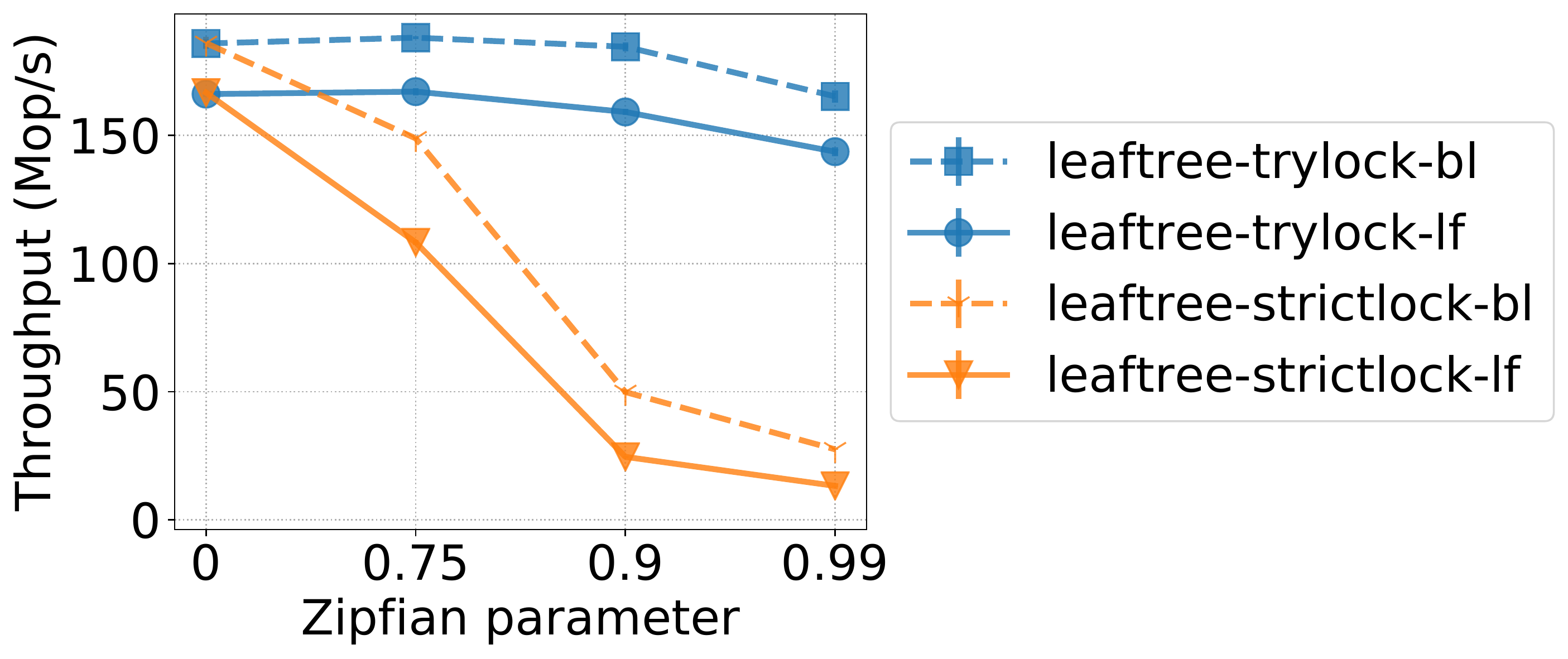}
	\caption{Comparing \tlock{} with \slock{} on workload with 100K keys, 144 threads, and 50\%
		updates. The `bl' and `lf' suffixes represent the blocking and lock-free version of our locks, respectively.}
	\label{fig:try-vs-strict}
\end{figure}

\myparagraph{Try vs strict lock} In data structures that employ optimistic locking, \trylock{} is often preferable to \strictlock{}.
This is because optimistic locking requires checking for consistency after taking the necessary locks.
So if a process $p_1$ tries to acquire a lock that is held by a another process $p_2$, it is better for $p_1$ to restart its operation instead of waiting to acquire the lock because it will likely fail its consistency check due to modifications by $p_2$.
We see this happen in the \leaftree{} in Figure~\ref{fig:try-vs-strict}.
The higher $\alpha$ is, the more contention there is on the locks, and the more beneficial \trylock{} becomes.
This holds for both \blocklock{}s and \locklesslock{}s.
In the rest of this section, we only report on \trylock{}s. 


\begin{figure*}
	
	
\begin{subfigure}{0.99\textwidth}
\includegraphics[width=0.69\textwidth,trim={0cm 0 0 0},clip]{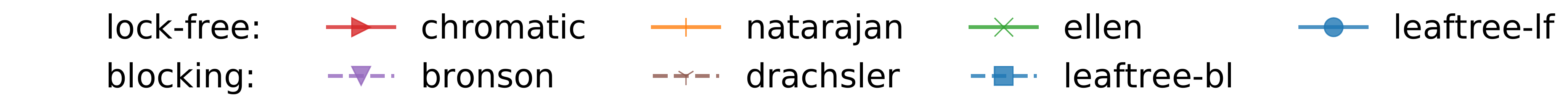}
\end{subfigure}
%

%

\centering


	\begin{subfigure}{0.24\textwidth}
		\centering
		\includegraphics[width=\textwidth]{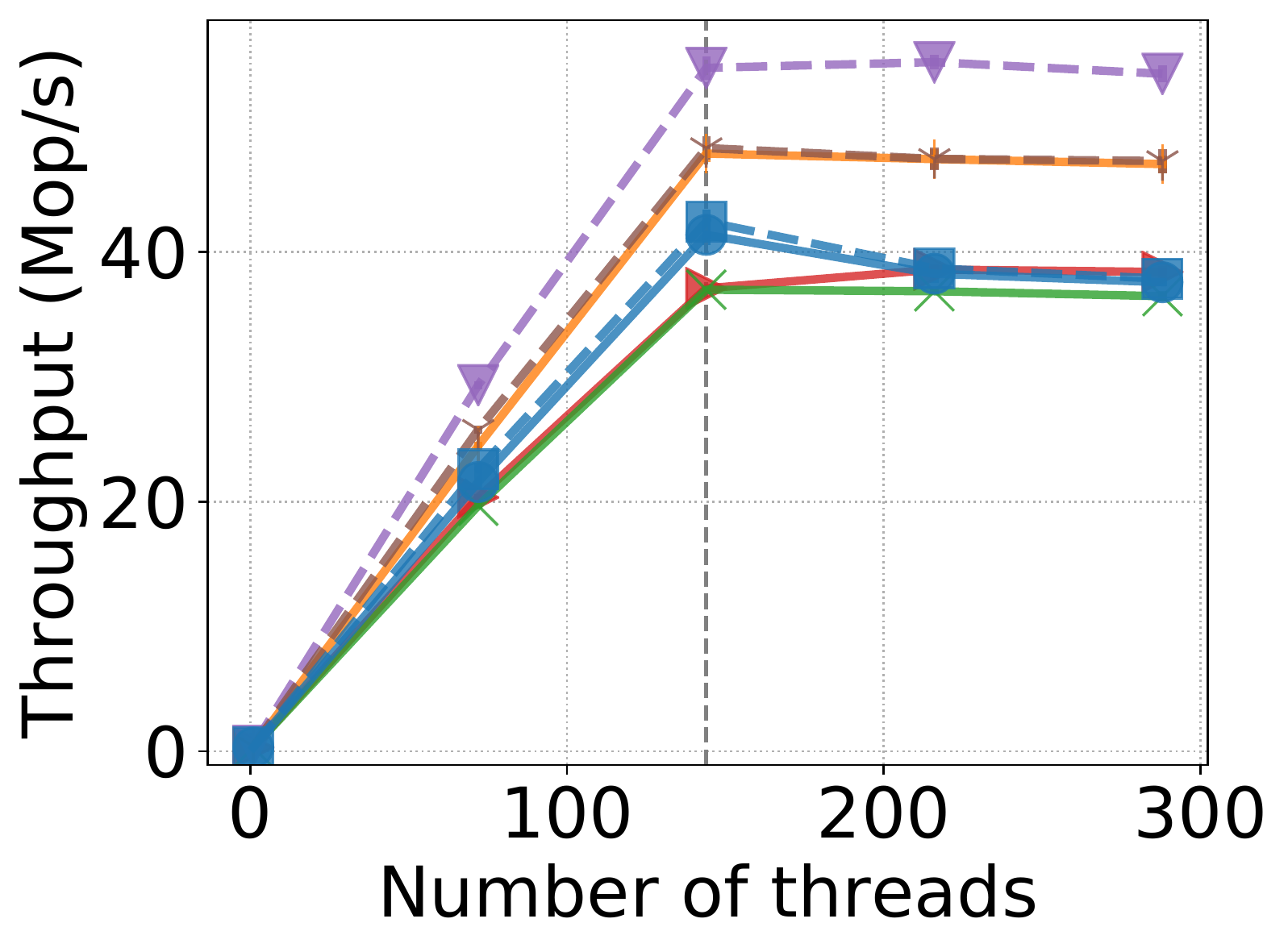}
		\captionsetup{justification=centering}
		\caption{100M keys, 50\% up., $\alpha = 0.75$}\label{subfig:tree-100M-threads}
	\end{subfigure}
	\begin{subfigure}{0.24\textwidth}
		\centering
		\includegraphics[width=\textwidth]{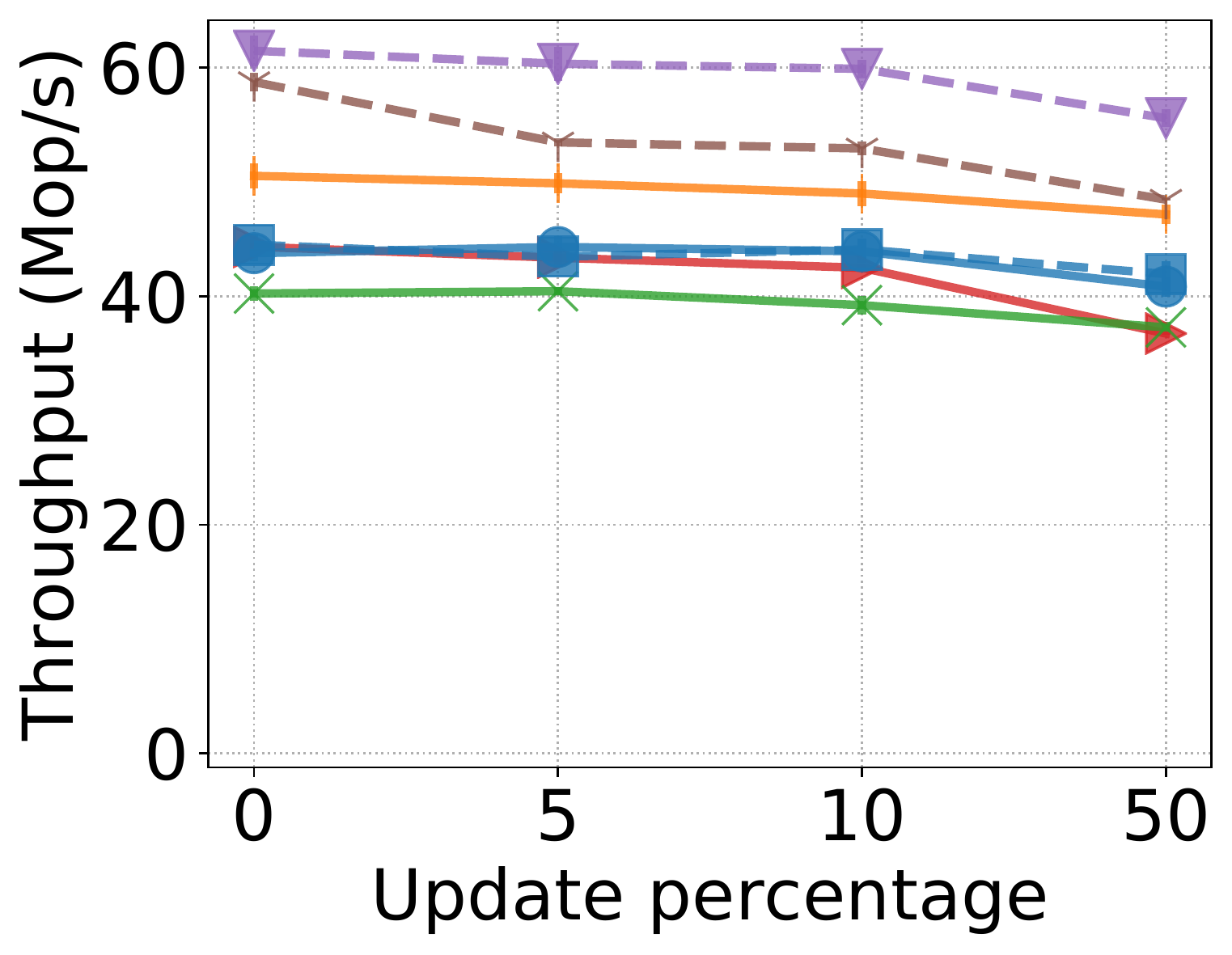}
		\captionsetup{justification=centering}
		\caption{100M keys, 144 th., $\alpha = 0.75$}\label{subfig:tree-100M-updates}
	\end{subfigure}
	\begin{subfigure}{0.24\textwidth}
		\centering
		\includegraphics[width=\textwidth]{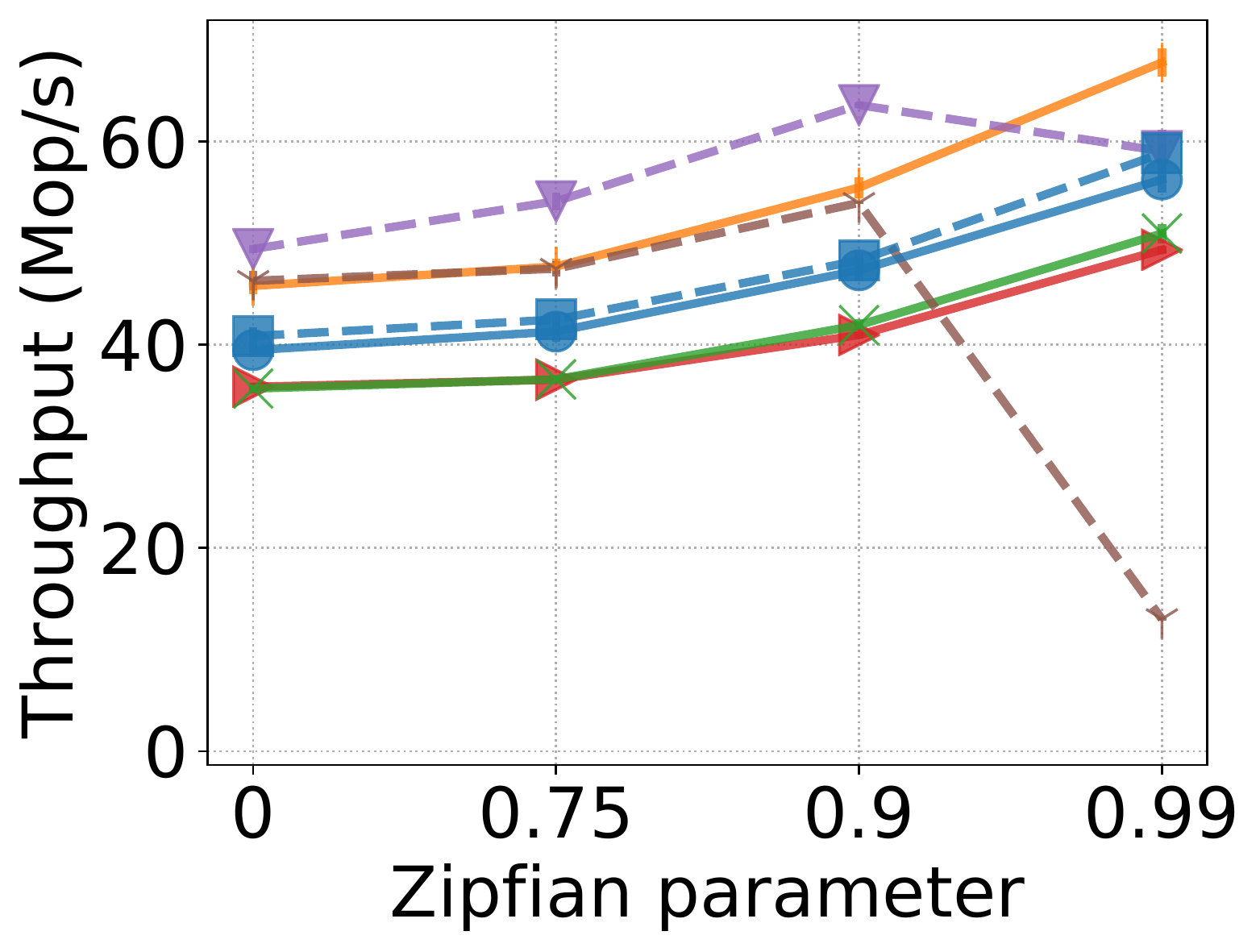}
		\captionsetup{justification=centering}
		\caption{100M keys, 144 th., 50\% up.}\label{subfig:tree-100M-zipf-144}
	\end{subfigure}
	\begin{subfigure}{0.24\textwidth}
		\centering
		\includegraphics[width=\textwidth]{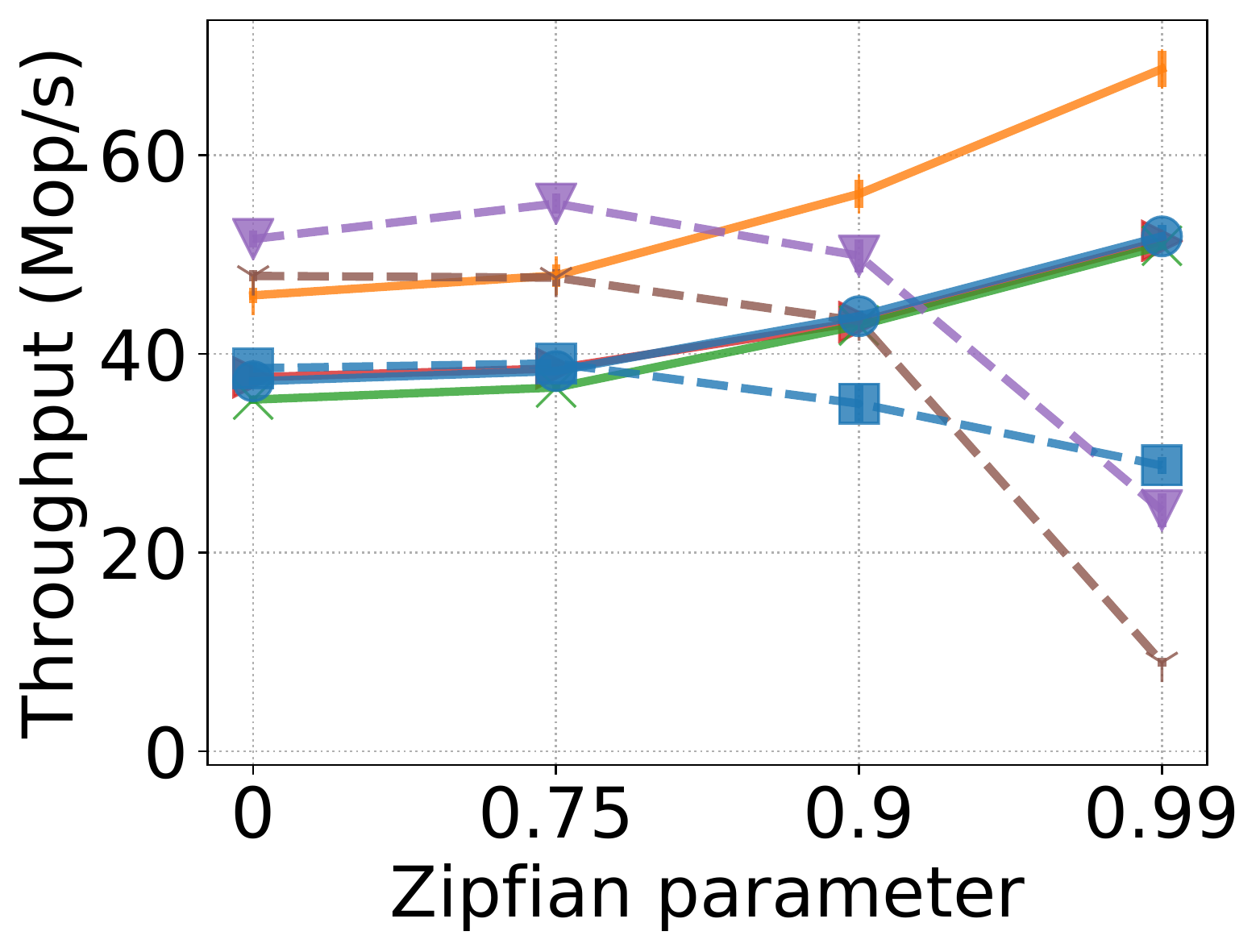}
		\captionsetup{justification=centering}
		\caption{100M keys, 216 th., 50\% up.}\label{subfig:tree-100M-zipf-200}
	\end{subfigure}
	
	\bigskip
	
	\begin{subfigure}{0.24\textwidth}
		\centering
		\includegraphics[width=\textwidth]{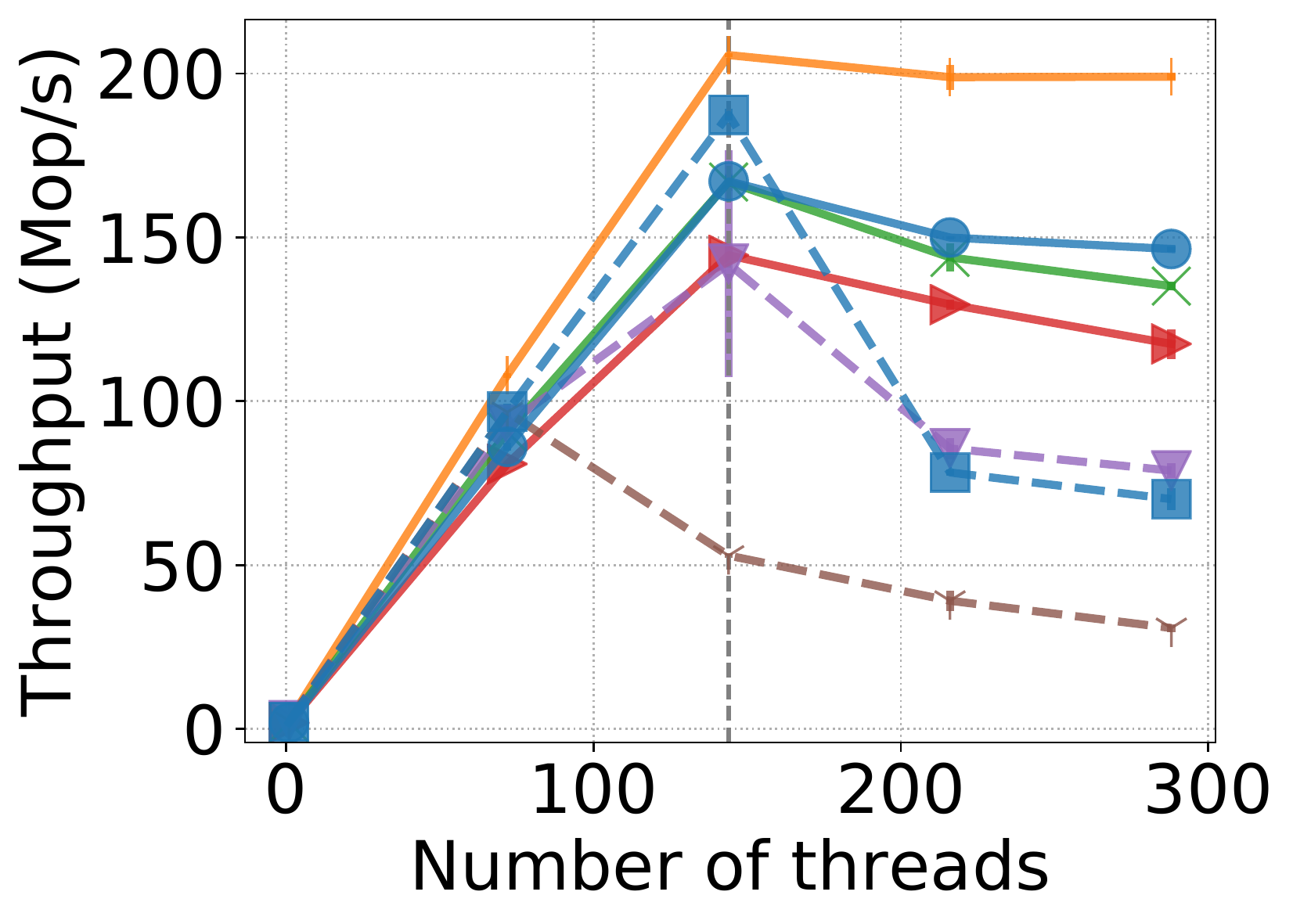}
		\captionsetup{justification=centering}
		\caption{100K keys, 50\% up., $\alpha = 0.75$}\label{subfig:tree-100K-threads}
	\end{subfigure}
	\begin{subfigure}{0.24\textwidth}
		\centering
		\includegraphics[width=\textwidth]{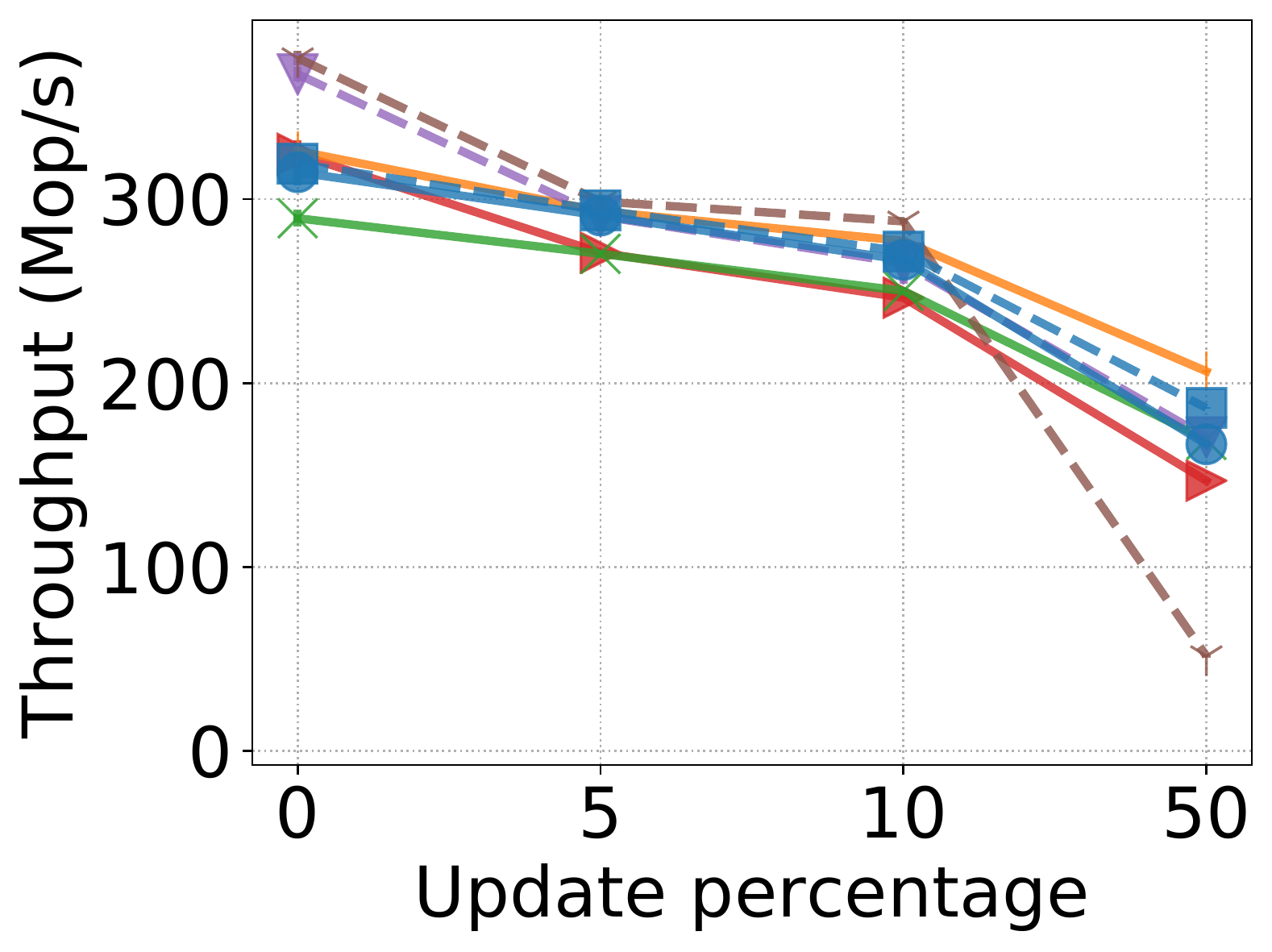}
		\captionsetup{justification=centering}
		\caption{100K keys, 144 th., $\alpha = 0.75$}\label{subfig:tree-100K-updates}
	\end{subfigure}
	\begin{subfigure}{0.24\textwidth}
		\centering
		\includegraphics[width=\textwidth]{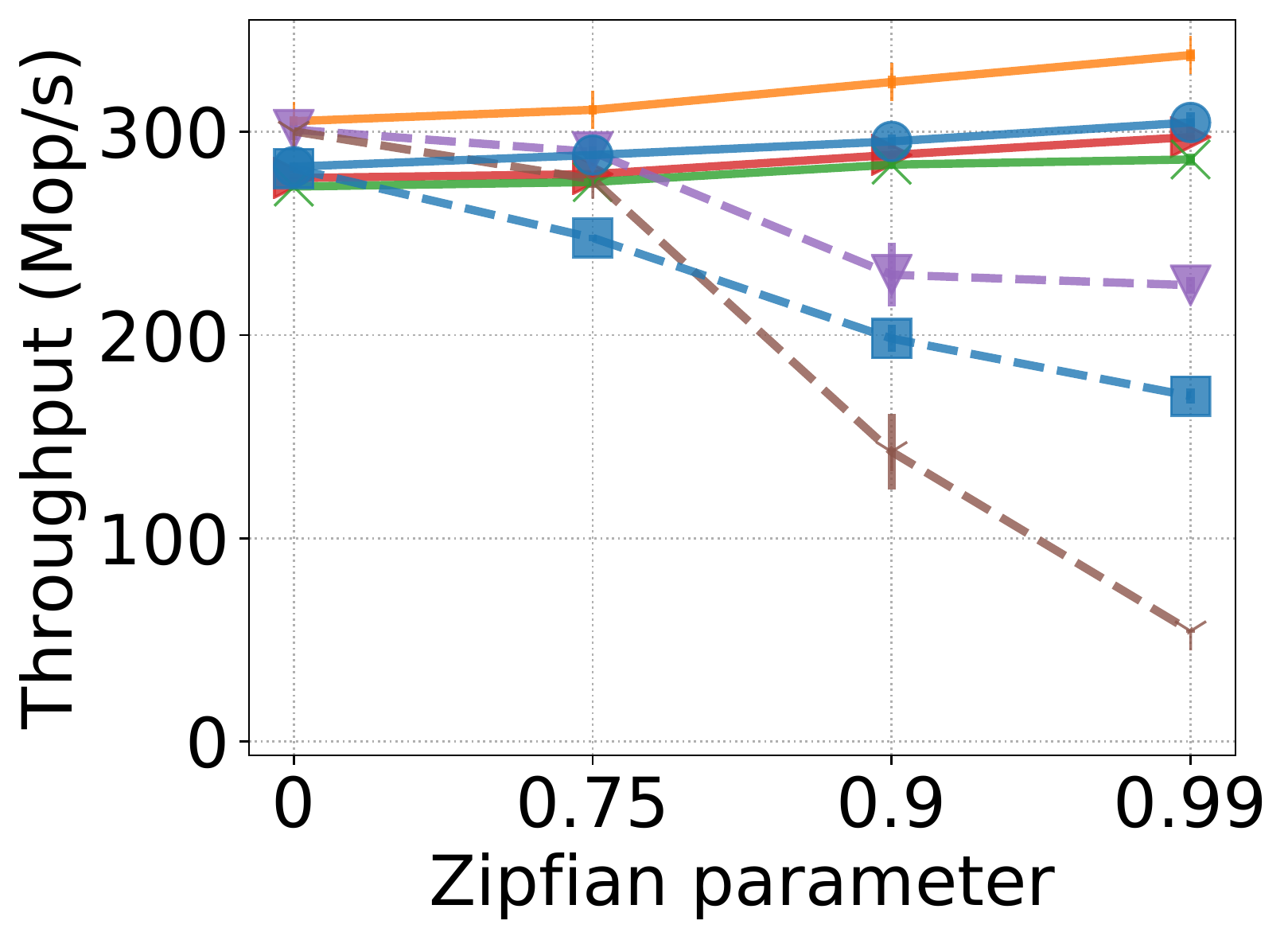}
		\captionsetup{justification=centering}
		\caption{100K keys, 216 th., 5\% up.}\label{subfig:tree-100K-zipf}
	\end{subfigure}  
	\begin{subfigure}{0.24\textwidth}
		\centering
		\includegraphics[width=\textwidth]{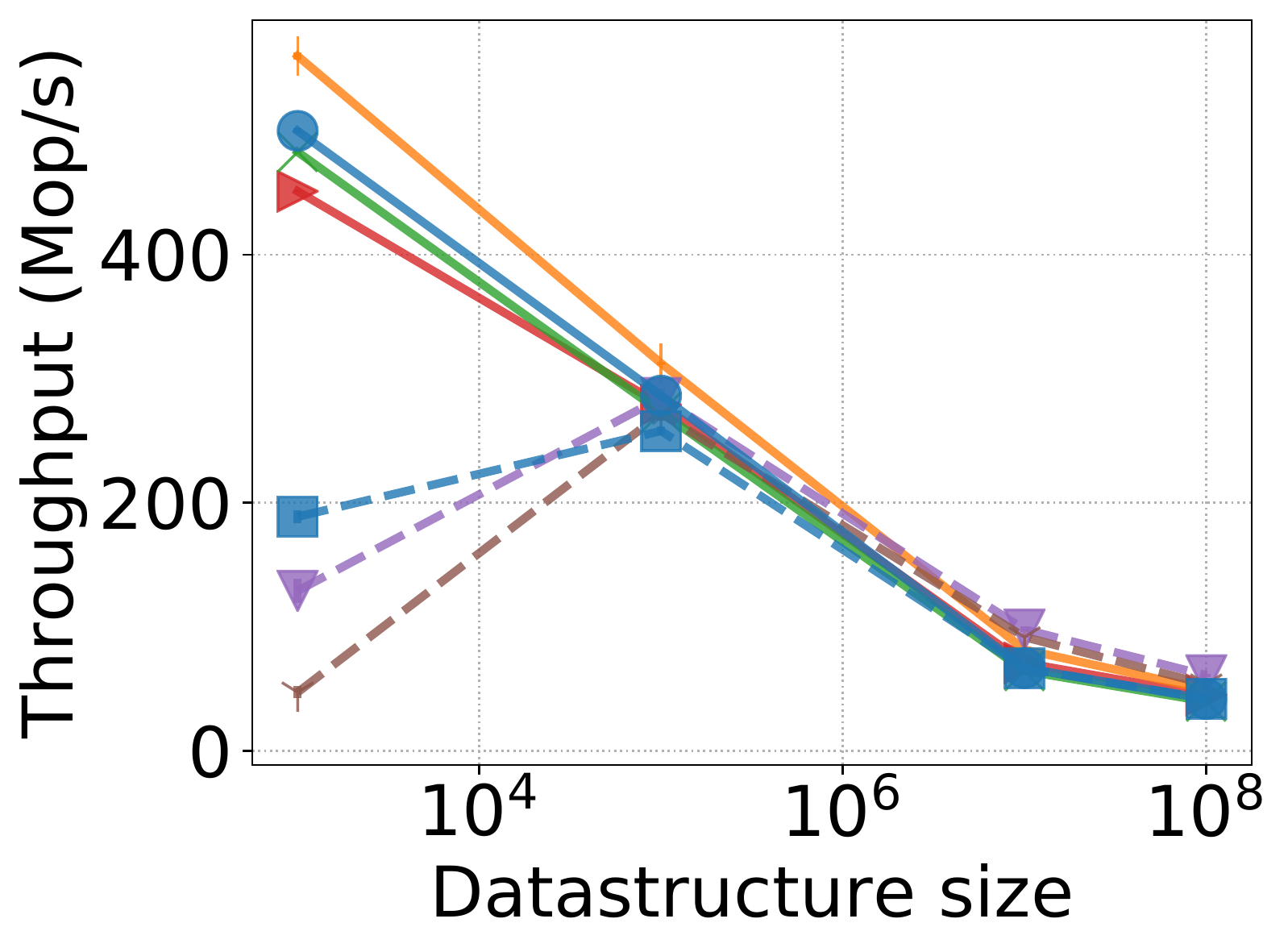}
		\captionsetup{justification=centering}
		\caption{216 th., 5\% up., $\alpha = 0.75$}\label{subfig:tree-sizes}
	\end{subfigure}
	\caption{Throughput of binary trees under a variety of workloads are shown. Dotted lines are used for blocking data structures and solid lines for lock-free ones. Subcaptions abbreviate `threads' to `th' and `updates' to `up'. The `bl' and `lf' suffixes represent the blocking and lock-free version of our locks, respectively.
	}
	\label{fig:tree-exp}
\end{figure*}

\begin{figure}
	\centering
	\begin{subfigure}{0.49\textwidth}
		\centering
		\includegraphics[width=\textwidth,trim={4.1cm 0 0 0},clip]{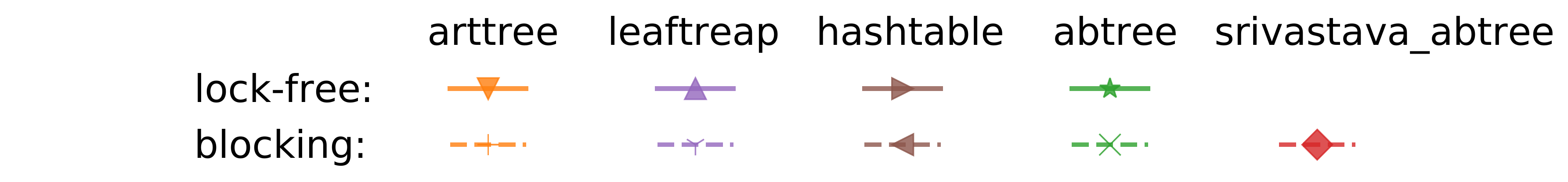}
	\end{subfigure}
	\begin{subfigure}{0.23\textwidth}
		\centering
		\includegraphics[width=\textwidth]{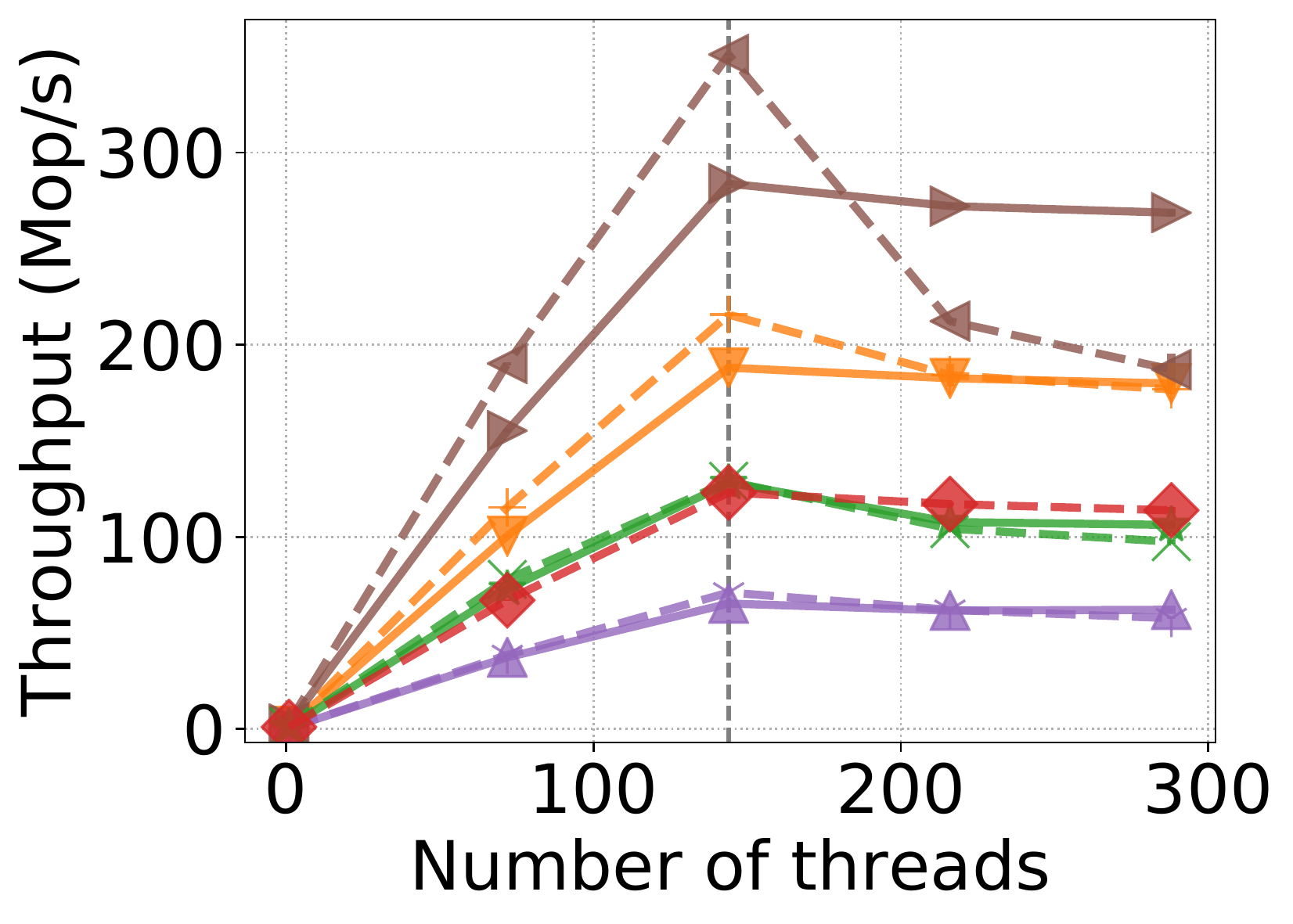}
		\captionsetup{justification=centering}
		\caption{100M keys, 50\% up., $\alpha = 0.75$}\label{subfig:rtree-threads}
	\end{subfigure}
	\begin{subfigure}{0.23\textwidth}
		\centering
		\includegraphics[width=\textwidth]{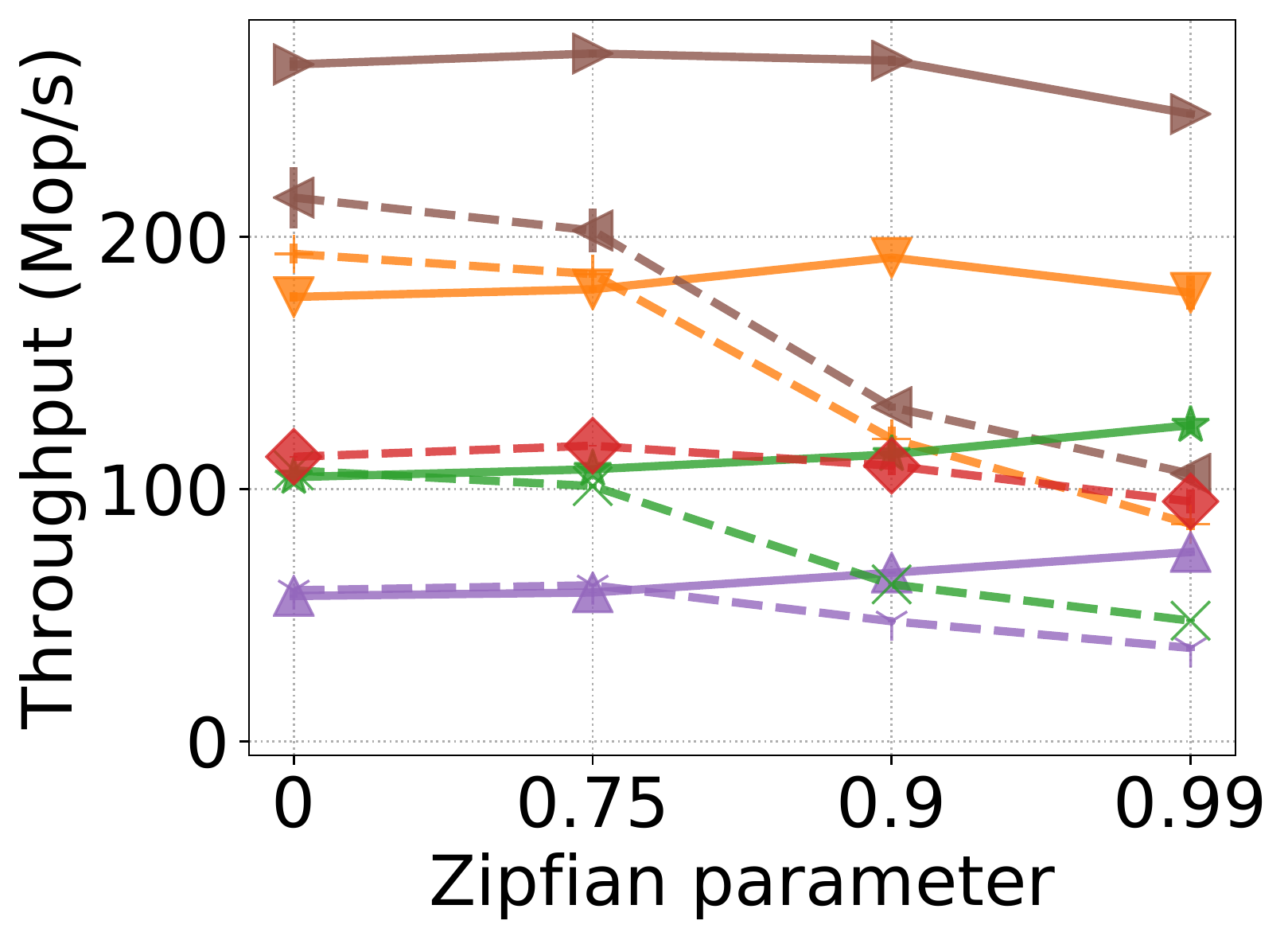}
		\captionsetup{justification=centering}
		\caption{100M keys, 216 th., 50\% up.}\label{subfig:rtree-zipf}
	\end{subfigure}
	\caption{Throughput of concurrent set data structures.}
	\label{fig:rtree-exp}
\end{figure}

\myparagraph{Binary trees} Figure~\ref{fig:tree-exp} shows the throughput of concurrent trees under a wide range of workloads. We compare our tree implementations with state-of-the-art lock-based (\bronson{}~\cite{bronson10}, \drachsler{}~\cite{drachsler14}) and lock-free (\ellen{}~\cite{EllenFRB10}, \chromatic{}~\cite{brown2014general} and \natarajan{}~\cite{Natarajan14}) binary search trees.
These implementations were obtained from the SetBench benchmarking suite~\cite{Brown18}.
\bronson{} and \chromatic{} are the only balanced tree among these implementations.
Regarding the lock-free trees, \ellen{} and \natarajan{} are implemented directly from CAS whereas \chromatic{} is implemented using the higher-level LLX/SCX primitives~\cite{BrownER13}.
Note that in all the graphs, lock-based algorithms are denoted by dotted lines and lock-free algorithms appear as solid lines.
\Guy{Why only mention ellen, chromatic and natarajan?  Perhaps drop last sentence.}
\Hao{dropped last sentence.}

Figures~\ref{subfig:tree-100M-threads}-~\ref{subfig:tree-100M-zipf-200} consider the case where the tree does not fit in cache \hedit{and Figures~\ref{subfig:tree-100K-threads}-~\ref{subfig:tree-100K-zipf} consider the case where they do. 
In out-of-cache workloads,} performance is dominated by cache misses incurred during the traversal phase. Figure~\ref{subfig:tree-100M-updates} shows that the cost of updating the tree is small compared to these cache misses, \hedit{whereas in Figure~\ref{subfig:tree-100K-updates}, increasing the percentage of updates significantly reduced throughput}.
All trees scale well, up until oversubscription (Figures~\ref{subfig:tree-100M-threads} and~\ref{subfig:tree-100K-threads}), with the exception of \drachsler{} in Figure~\ref{subfig:tree-100K-threads}.
\bronson{} is generally the fastest when tree size is large because it is better balanced compared to the other trees (many of which are only balanced in expectation due to random inserts), resulting in shorter traversals and less cache misses. 
As the zipfian parameter $\alpha$ increases, all trees except \bronson{} and \drachsler{} speed up because higher $\alpha$ means more locality and less cache misses (Figure~\ref{subfig:tree-100M-zipf-144}).
However, large $\alpha$ also means more contention. 
In the case of \bronson{} and \drachsler{}, which both use blocking strict-locks, this extra contention out-weighs the benefits of locality.
\hedit{This effect is even more severe for small trees (Figure~\ref{subfig:tree-100K-zipf}).}
\Guy{Seems odd we say a lot about 100M and nothing about 100K.}
\Hao{added more discussion of 100K graphs throughout the paragraph.}


\myparagraph{Lock-free vs blocking} Next, we compare the performance of lock-free data structures with blocking ones, with particular emphasis on \leaftreelf{} and \leaftreelb{}, the lock-free and blocking variants, respectively, of our \leaftree{}.
The overhead of \locklesslock{}s come from two main sources (1) allocating and initializing a new descriptor every time a lock is acquired, and (2) committing values to the log during critical sections.
A successful \op{insert} commits about 5 entries to the log.
This overhead is only visible in small trees with high update rates (Figures~\ref{fig:try-vs-strict} and~\ref{subfig:tree-100K-threads}).
\hedit{Across all the graphs} in Figure~\ref{fig:tree-exp}, the overhead of using \locklesslock{}s rather than traditional blocking locks is no more than 11\% \Guy{on which graph?}. 
\hedit{Furthermore,} most graphs do not show any visible overhead.

Where lock-free algorithms shine is in oversubscribed cases (e.g. 288 threads) with high contention. 
\future{\Hao{make sure we justify oversubscription somewhere. Meant to simulate a non-dedicated environments.}}
This is because a thread may get descheduled while it is partway through an update, and in a lock-free algorithm, if another thread wants to update the same location, it can simply help complete the inactive thread's update and then proceed with its own.
However, in a blocking data structure, the new thread will have to either wait for the inactive thread to be scheduled again and release its lock, or yield and context switch, both of which are expensive.
This effect can be seen in the right side of Figures~\ref{subfig:tree-100M-zipf-200} and~\ref{subfig:tree-100K-zipf} and the left side of Figure~\ref{subfig:tree-sizes} where the four lock-free trees outperform the three blocking trees.
In particular, \leaftreelf{} outperforms \leaftreaplb{} by up to 2.4x in Figure~\ref{subfig:tree-sizes}.

\myparagraph{Other set datatypes} In Figure~\ref{fig:rtree-exp}, \arttree{}, \leaftreap{}, \abtree{} and \hashtable{}, generally follow the same pattern as \leaftree{}.
That is, lock-free versions outperform their blocking counterparts in oversubscribed, high contention scenarios (right side of Figure~\ref{subfig:rtree-zipf}), by up to 2.5x in the case of the \hashtable{} and 2x for the \arttree{}.
In non-oversubscribed scenarios (left size of Figure~\ref{subfig:rtree-threads}), the overhead of using \locklesslock{}s is small, especially for \abtree{} and \leaftreap{}. 
\hedit{The overhead of \locklesslock{}s is highest in the \hashtable{} because its search time (i.e. fraction of time spent outside of the critical section) is small and hence the overhead for the locked part plays a larger role.}
Figure~\ref{fig:rtree-exp} also compares our data structures with Srivastava's CoPub-ABtree~\cite{srivastava2021extremely}, a state-of-the-art blocking (a,b)-tree.
Our lock-free \abtree{} performs similarly to \op{srivastava\_abtree} in most cases but is up to 32\% faster at the right of Figure~\ref{subfig:rtree-zipf}. 
\Guy{Might want to mention that blocking version of hash table has largest difference because its search time (not affected by helping) is small and hence the overhead for the locked part plays a larger role.} \Hao{Added a sentence about this}


\begin{figure}
	\centering
	\begin{subfigure}{0.37\textwidth}
		\centering
		\includegraphics[width=\textwidth,trim={2.3cm 0 0 0},clip]{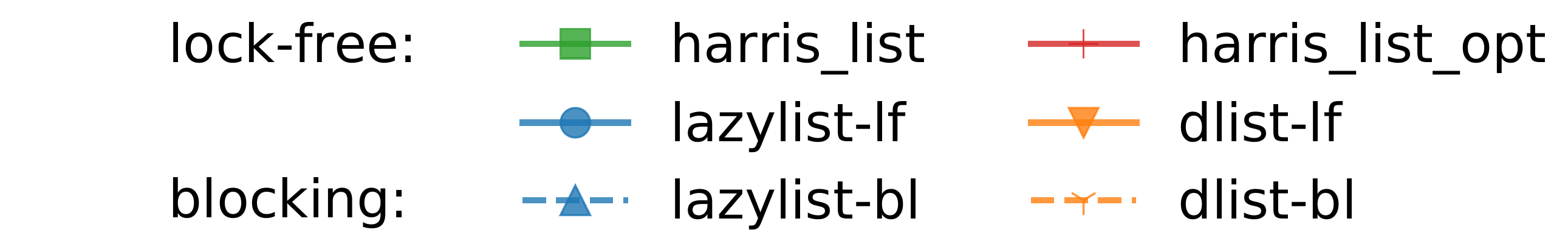}
	\end{subfigure}
	\begin{subfigure}{0.23\textwidth}
		\centering
		\includegraphics[width=\textwidth]{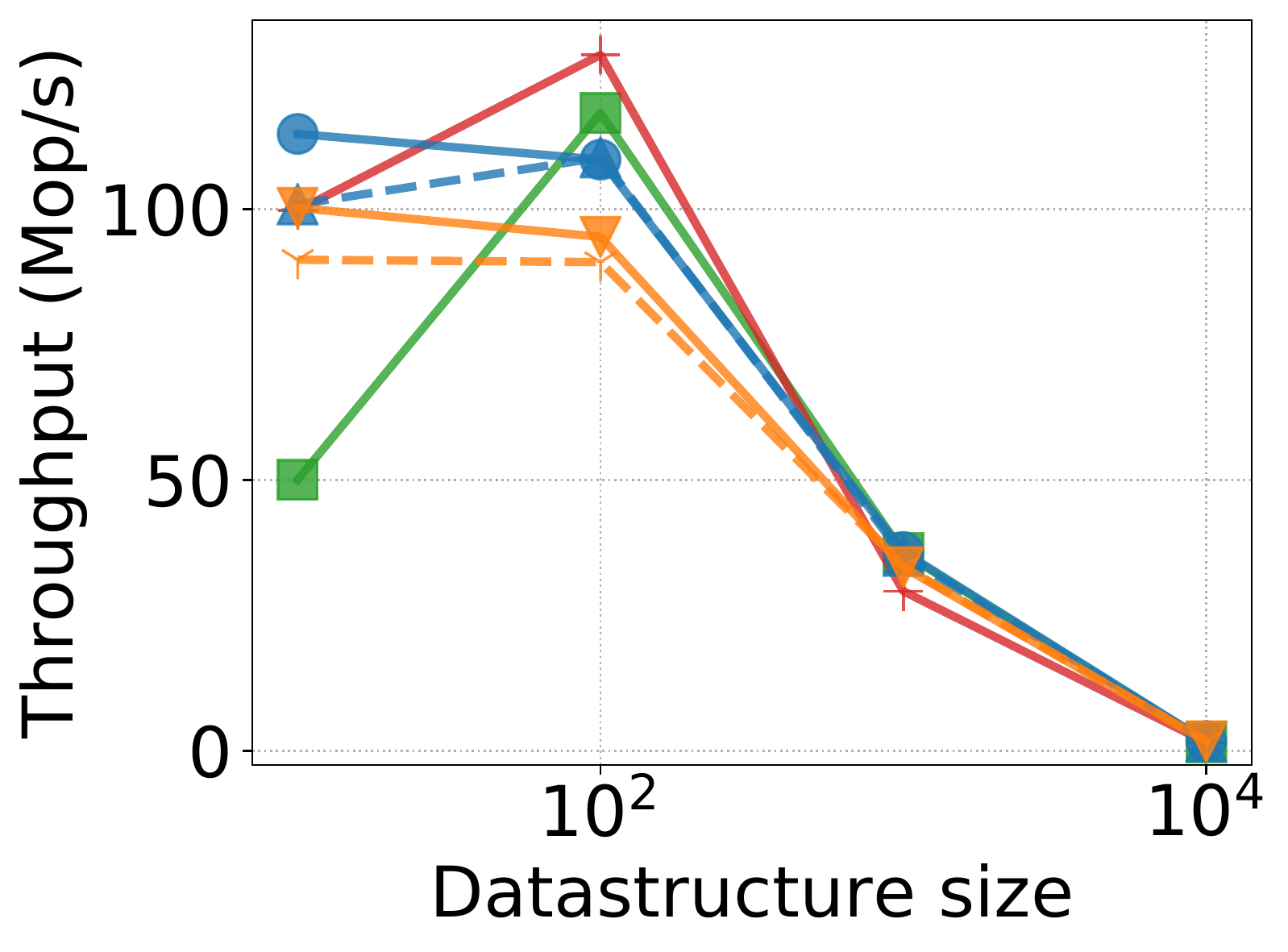}
		\captionsetup{justification=centering}
		\caption{144 th., 5\% up., $\alpha = 0.75$}\label{subfig:list-sizes}
	\end{subfigure}
	\begin{subfigure}{0.23\textwidth}
		\centering
		\includegraphics[width=\textwidth]{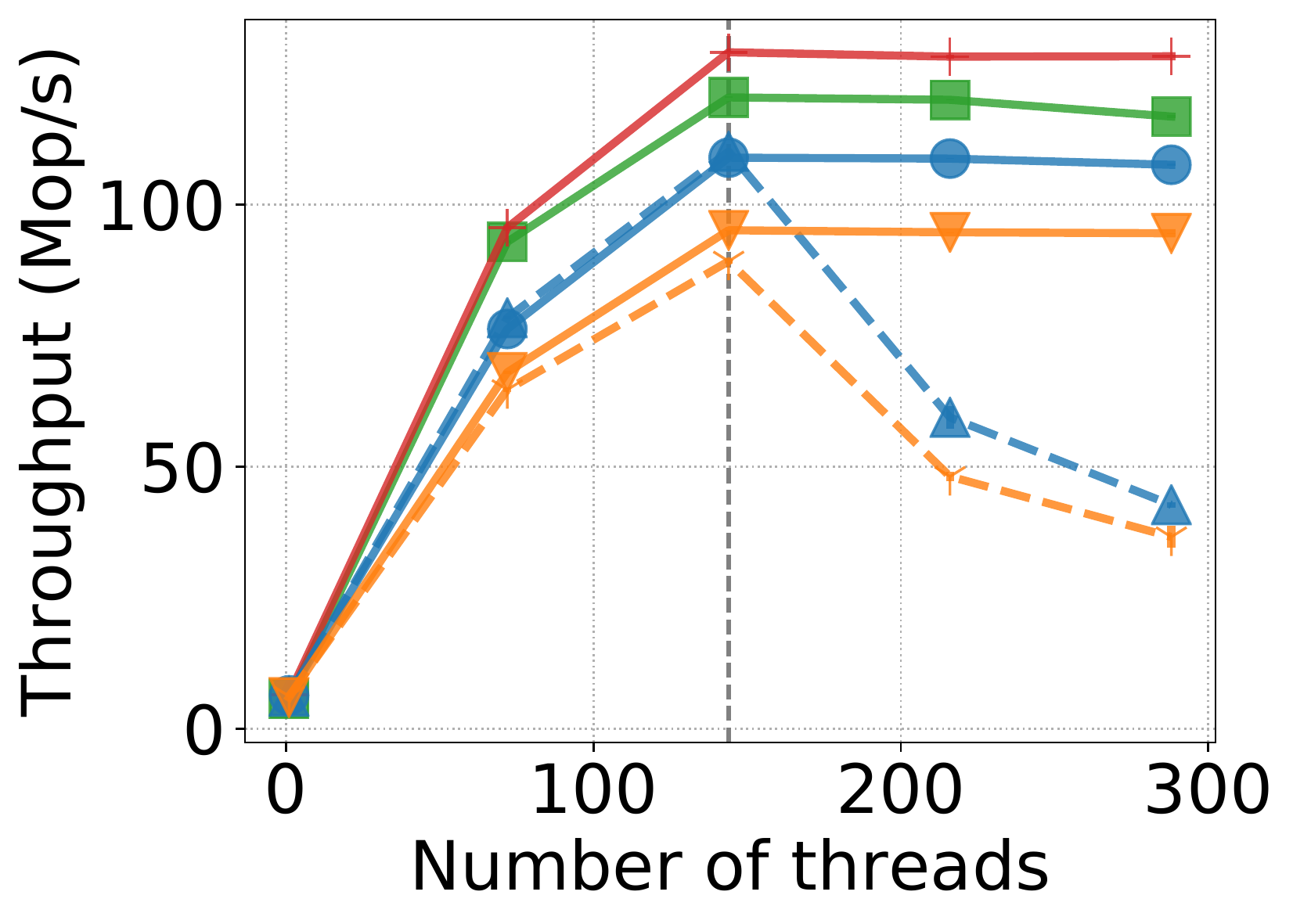}
		\captionsetup{justification=centering}
		\caption{100 keys, 5\% up., $\alpha = 0.75$}\label{subfig:list-threads}
	\end{subfigure}
	\caption{Throughput of singly and doubly linked lists. The `bl' and `lf' suffixes represent the blocking and lock-free version of our locks, respectively.}
	\label{fig:list-exp}
\end{figure}

\myparagraph{Linked List Experiments} Figure~\ref{fig:list-exp} compares doubly and singly linked lists written using our lock-free locks (\dlist{} and \lazylist{}, respectively) with Harris's lock-free singly linked list~\cite{harris2001pragmatic} (\op{harris\_list}), and an optimized version of Harris's list where \op{find} operations do not perform any helping~\cite{DGT15} (\op{harris\_list\_opt}).
In most cases, our lock-free \lazylist{} is slower than \op{harris\_list\_opt} by about 16\% because the descriptors in \op{harris\_list\_opt} are optimized to simply be flags. Interestingly, the lock-free versions of \dlist{} and \lazylist{} outperform their corresponding blocking versions even without oversubscription on small lists (left of Figure~\ref{subfig:list-sizes}).

The pseudo-code for \dlist{} was presented earlier in Algorithm~\ref{alg:list} and Figure~\ref{fig:list-exp} show that this simple algorithm performs well.
The overhead of maintaining back pointers is only about 13\% (comparing \dlist{} with \lazylist{}).

	\section{Conclusion}

We presented a mechanism for implementing lock-free locks, and a library-based implementation.  It is the first such library implementation of lock-free locks we know of.  The approach is practical in two senses.  Firstly, in terms of performance it is competitive with state-of-the-art lock-free and lock-based data structures.  Secondly, using the library requires very few changes to existing lock-based implementations---basically wrapping shared values in a mutable, and using the \flock{} lock structure and memory management.  In terms of functionality it significantly extends previous suggested approaches to lock-free locks, supporting memory management, races, and \trylock{}s.

We separate out the idea of idempotent blocks of code (thunks) and present a general and efficient approach along with
a C++-based library to support them.  The approach supports arbitrary code with load, stores and CAMs on shared locations, as well as memory allocation and retirement from a shared pool.  A thunk using the approach can be run any number of times with instructions interleaved in any way while behaving like it ran once.   The approach uses a shared log for each thunk so that separate runs of the thunk see the same result.  The idempotent construction could be of independent interest.

We implemented several data structures using the approach.  With regards to the opening question of whether to be lock-free or not to be, the experiments clearly indicate the advantage of lock-freedom when processors are oversubscribed.  Our experiments are some of the first on concurrent data structures that study \hedit{this effect.}
The experiments also show that the overhead of being lock-free for our structures is relatively small (rarely more than 10\%) and often hardly noticeable.

\begin{acks}
	We thank the anonymous referees for their comments. This work was supported by the National Science Foundation grants CCF-1901381, CCF-1910030, and CCF-1919223.
\end{acks}
	
	\bibliographystyle{ACM-Reference-Format}
	\bibliography{strings,biblio}
        \appendix

\section{Proof of Theorem~\ref{theorem:idempotent}}

\begin{proof}
	Given an execution $E$ consisting of runs of $T$ interleaved with arbitrary other steps on shared data, we will construct a subsequence $E'$ of $E|T$ that satisfy the criteria from Definition~\ref{def:idempotence} (with the relaxation that retire operations in $E'$ are allowed to appear later than they would have in a single run of $T$).
	Throughout the proof, we will refer to load, store, CAM, allocate, retire as operations, and executions of primitive shared memory instructions such as read, write, and CAS as steps.
	
	We begin by viewing the execution at the level of operations. We show by induction that all runs of $T$ execute the same sequence of operations with the same arguments and return values.
	As the base case, note that all processes that execute $T$ start with the same local variables, and $T$ takes no arguments. Therefore, they begin the execution in the same state. 
	As the inductive hypothesis, assume that the first $k-1$ operations executed by $T$ are the same across all runs and have the same arguments and return values. 
	Consider the $k$th operation $O_k$.
	Since all previous operations returned the same value across all runs, then $O_k$ is the same operation and is called with the same arguments across all runs. 
	Note furthermore that if $O_k$ executes line~\ref{line:commit}, the CAS on that line is successful in exactly one instance. 
	All processes executing $O_k$ use position $k$ to access the log, and no process executing a different operation uses position $k$. Therefore, before the first execution of line~\ref{line:commit} for $O_k$, \texttt{log[k] = empty}. 
	Since we assume \texttt{empty} is never written in any allocated variable, 
	the new value of the CAS on line~\ref{line:commit} will never be \texttt{empty}. Therefore, the first instance of that CAS will be successful, and all others will fail. 
	Therefore, if $O_k$ is a \texttt{load} or an \texttt{allocate}, since those operations return the value read from \texttt{log[k]} after the first CAS on line~\ref{line:commit} for $O_k$, all its instances will return the same value. Note that all other operations do not return a value, so the claim holds.
	
	Next, we construct the subsequence $E'$ by picking steps so that each operation $O_k$ appears to only run once.
	We will ensure that 
	all steps picked from runs of $O_{k}$ appear before those picked from runs of $O_{k+1}$,
	except when $O_k$ is a \texttt{retire} operation in which case its call to \texttt{sysRetire} may appear later.
	For each operation $O_k$, consider the run that executes the CAS on line~\ref{line:commit} first.
	We pick a prefix of that run, starting from the beginning of $O_k$ up to when it executes line~\ref{line:commit} (inclusive), to be part of $E'$.
	As shown in the previous paragraph, executions of the CAS on line~\ref{line:commit} by other runs of $O_k$ will return false.
	Next, we pick the first execution of line~\ref{line:commit-load} by any run of $O_k$ to be part of $E'$, and we pick the remaining steps differently depending on what type of operation $O_k$ is.
	\texttt{load} operations do not perform any more shared memory steps so we are done.
	Let $r$ be a run of $O_k$ that is consistent with the sequence of steps we have picked so far for $E'$.
	Since we picked the successful instance of line~\ref{line:commit}, \texttt{isFirst} is set to true for $r$.
	Therefore, if $O_k$ is an \texttt{allocate}, then $r$ will not execute any more shared memory steps after line~\ref{line:commit-load}, so $E'$ contains all of $r$'s steps.
	If $O_k$ is an \texttt{retire}, then whichever run executed the successful CAS on line~\ref{line:commit} will eventually execute a \texttt{sysRetire} on line~\ref{line:retire}, and we pick that \texttt{sysRetire} to be part of $E'$ (if it exists in $E$).
	Note that this \texttt{sysRetire} may appear in $E'$ after steps by future operations and this is allowed by our relaxed idempotence definition.	
	If $O_k$ is a \texttt{store}, then we pick the first execution of the \texttt{CAS} on line~\ref{line:storeCAS} to be part of $E'$.
	All executions of this CAS by future runs of $O_k$ will return false because \texttt{oldV} was previously stored in \texttt{val} and we assume mutable types are ABA-free.
	Finally, suppose $O_k$ is a \texttt{CAM}.
	Since the value of \texttt{check} on line~\ref{line:CAMcheck} was read from the log on line~\ref{line:commit-load}, all runs of $O_k$ will have the same value for \texttt{check}.
	Therefore, either all runs will execute the \texttt{CAS} on line~\ref{line:CAMCAS} or none of them will.
	If they execute the \texttt{CAS}, then we pick the first such step to be part of $E'$, just like for \texttt{store}s.
	Otherwise, $O_k$ performs no more shared memory steps and we are done.
	
	Picking steps in this manner ensures that $T$ appears to run once in $E'$ and if there is a finished run of $T$ in $E$, then the last step of the first finished run will be the end of $E'$ (with the exception of \texttt{sysRetire}).
	Furthermore, the steps in $E|T$ that we did not pick have no effect on shared memory so removing them still leaves a valid history.
	This is the case for any removed CAS operation because they all return false.
	Also, memory locations allocated by removed \texttt{sysAllocate} operations are never used since they are never committed to the log. 
	Finally, the \texttt{sysFree}s that were removed correspond to the removed \texttt{sysAllocate} operations.
	Therefore, removing all of $T$'s steps from $E$ other than those in $E'$ leaves a valid history.
\end{proof}

\end{document}